\def\qu#1 {\fbox {\footnote {\ }}\ \footnotetext { From Qu: {\color{red}#1}}}
\def\hqu#1 {}
\def\kq#1 {\fbox {\footnote {\ }}\ \footnotetext { From KangQuan: {\color{blue}#1}}}
\def\hkq#1 {}
\newcommand{\mkq}[1]{{{\color{black}#1}}}
\newtheorem{Th}{Theorem}[section]
\newtheorem{Cor}[Th]{Corollary}
\newtheorem{Prop}[Th]{Proposition}
\newtheorem{Lemma}[Th]{Lemma}
\newtheorem{Def}[Th]{Definition}
\newtheorem{example}{Example}
\newtheorem{Rem}[Th]{Remark}
\newcommand{\tr}{{\rm Tr}}
\newcommand{\gf}{{\mathbb F}}
\newcommand{\figcaption}{\def\@captype{figure}\caption}
\newcommand{\tabcaption}{\def\@captype{table}\caption}
\begin{document}
	\title{New Results about the Boomerang Uniformity of Permutation Polynomials}
	\author{{Kangquan Li, Longjiang Qu, Bing Sun and Chao Li}
		\thanks{Kangquan Li, Longjiang Qu, Bing Sun and Chao Li are with the College of Liberal Arts and Sciences,
			National University of Defense Technology, Changsha, 410073, China.
			Longjiang Qu is also with the State Key Laboratory of Cryptology, Beijing, 100878, China.
			E-mail: likangquan11@nudt.edu.cn, ljqu\_happy@hotmail.com, happy\_come@163.com, lichao\_nudt@sina.com.
		 This work is supported by the Nature Science Foundation of China (NSFC) under Grant 61722213, 11531002, 61572026,  61772545,  National Key R$\&$D Program of China (No.2017YFB0802000),  and the Open Foundation of State Key Laboratory of Cryptology. }}

	\maketitle{}
	
\begin{abstract}
 In EUROCRYPT 2018,  Cid et al. \cite{BCT2018} introduced a new concept \mkq{on the cryptographic property of} S-boxes: Boomerang Connectivity Table (BCT for short) for evaluating the subtleties of boomerang-style attacks. \mkq{ Very recently, BCT and the boomerang uniformity, the maximum value in BCT, were further studied by Boura and Canteaut \cite{BC2018}. Aiming at providing new insights, we show some new results about  BCT and the boomerang uniformity of permutations in terms of theory and experiment in this paper.  Firstly, we present an equivalent \mkq{technique to compute BCT} and the boomerang uniformity, which seems to be much simpler than the original definition \mkq{ from \cite{BCT2018}.} Secondly, thanks to Carlet's idea \cite{Carlet2018}, we give a characterization of functions $f$ \mkq{from  $\gf_{2}^n$ to itself} with boomerang uniformity $\delta_{f}$ by means of the Walsh transform.  Thirdly, by our method, we consider boomerang uniformities of some specific permutations, mainly the ones with low differential uniformity. Finally, we obtain another class of $4$-uniform BCT permutation polynomials over $\gf_{2^n}$, which is the first binomial.}
\end{abstract}	

\begin{IEEEkeywords}
	Finite Field, Boomerang Connectivity Table, Boomerang Uniformity, Permutation Polynomial
\end{IEEEkeywords}

\section{Introduction}

Let $p$ be a prime, $n$ any positive integer. We denote by $\gf_{p^n}$ the finite field with $p^n$ elements and by $\gf_p^n$ the $n$-dimensional vector space over $\gf_p$. For any set $E$, we denote the nonzero elements of $E$ by $E\backslash \{0\}$ or $E^{*}$. In this paper, we always identify the vector space $\gf_p^n$ with $\gf_{p^n}$ and consider functions from $\gf_p^n$ to itself as polynomials over $\gf_{p^n}$ for convenience.  A polynomial $f(x)\in\gf_{p^n}[x]$ is called a \emph{permutation polynomial } if the induced mapping $x \to f(x)$ is a permutation \mkq{over} $\gf_{p^n}$.  S-box (over $\gf_2^n$ or $\gf_{2^n}$) is an important component for block ciphers and it is often crucial to require S-boxes to be \mkq{permutations}.  For \mkq{the resistance against} known attacks, several criteria should be satisfied. For example, the Difference Distribution Table (DDT  for short) and the differential uniformity of \mkq{an} S-box \mkq{characterise the resistance of the cryptographic component against differential cryptanalysis} \cite{Biham1991}. \mkq{Furthermore, the differential uniformity, along with many other cryptographic properties of the S-boxes has been extensively studied these years. And a number of results with both theoretical and practical significance have been obtained. It is well known that for any $f$ over $\gf_{2^n}$, the elements of DDT are all even and the minimum of differential uniformities of $f$ is $2$. The functions with differential uniformity $2$ are called  Almost Perfect Nonlinear (APN for short) functions. }


\mkq{The Boomerang attack was proposed by Wagner \cite{Wagner1999} in 1999 and variants of the boomerang attack were later presented \cite{KKS2001,BK2009}. In order to evaluate the subtleties of boomerang-style attacks, in EUROCRYPT 2018, Cid et al. \cite{BCT2018} introduced a new cryptanalysis tool: Boomerang Connectivity Table  and Boomerang Uniformity (see Definition \ref{BCT-def-Fq}).  In \cite{BCT2018}, the authors analyzed the properties of BCT  theoretically, especially the relationship between BCT and DDT. They proved that S-boxes having $2$-uniform DDT always have $2$-uniform BCT and for any choice of $(a,b)$, the value in the BCT is greater than or equal to the one in the DDT. Therefore, for S-boxes,  $2$-uniform BCT permutations are equivalent to APN permutations. Very recently, BCT and the boomerang uniformity were further studied by Boura and Canteaut \cite{BC2018}. Through showing that boomerang uniformity is only an affine equivalent invariant and the classification \cite{LP2007} about all differentially $4$-uniform permutations of $4$ bits, Boura and Canteaut completely characterized the BCT of such permutations. In addition, they also obtained  the boomerang uniformities  of the inverse function and the Gold function over $\gf_{2^n}$.
}

\mkq{To better reveal the guidance of the newly proposed cryptographic criteria on how to design S-boxes, 
	in this paper, we further explore novel properties about BCT and the boomerang uniformity of permutations over $\gf_{2^n}$ theoretically and experimentally. Firstly, we give a new method about computing BCT and the boomerang uniformity of permutations, which is simpler than the original one. In detail, our definition transforms the problem solving an extremely complicated equation with a permutation and its inverse into that of solving an equation system including two simpler equations with only the permutation in order to compute the BCT and the boomerang uniformity of the permutation. After this transformation, not only can we compute BCT and the boomerang uniformity of permutations more simply, but we can study their properties, such as the characterization of the boomerang uniformity by the Walsh transform and so on, more easily. Moreover, using our new method, we compute boomerang uniformities of some specific permutations, mainly the ones with low differential uniformity and obtain another class of $4$-uniform BCT permutation polynomials over $\gf_{2^n}$,  
	 which is the first binomial up to now. } 
 
The rest of this paper is organized as follows. \mkq{In Section 2, we first recall the original definition about BCT and the boomerang uniformity of permutations from \cite{BCT2018}. Moreover, in consideration of the complexity of computing BCT and the boomerang uniformity, we give an equivalent formula to compute them, which seems simpler and can be generalized (not restricted to permutations).  Section 3  gives a characterization of $\delta_{f}$-uniform BCT functions by the Walsh transform. Next, we compute boomerang uniformities of some permutation polynomials with low differential uniformity over $\gf_{2^n}$ theoretically and experimentally and obtain another class of $4$-uniform BCT permutations in Sections 4 and 5, respectively. Finally, Section 6 is a conclusion.
}

\section{\mkq{New method of computing BCT and Boomerang Uniformity}}

In \cite{BCT2018}, Cid et al. introduced the concept of Boomerang Connectivity Table  of a permutation $f$ from $\gf_{2}^n$ to itself \mkq{ as follows, which is also suitable for the case $\gf_{2^n}$ clearly.

\begin{Def}
	\cite{BCT2018}
	\label{BCT-def-Fq}
	Let $f$ from $\gf_{2}^n$ to $\gf_{2}^n$ be an invertible function, and $a,b\in\gf_{2}^n$. The Boomerang Connectivity Table (BCT) of $f$ is given by a $2^n\times 2^n$ table $T$, in which the entry for the $(a,b)$ position is given by 
	\begin{equation}
	\label{T(a,b)}
	T(a,b)=\#\{x\in\gf_2^n | f^{-1}(f(x)+a)+f^{-1}(f(x+b)+a) = b \}.
	\end{equation}
 Moreover, for any $a,b\in\gf_2^n\backslash\{0\}$, the value
	$$\delta_f = \max \limits_{a,b\in\gf_2^n\backslash\{0\}} \#\{x\in\gf_2^n | f^{-1}(f(x)+a)+f^{-1}(f(x+b)+a) = b \},  $$
  is called {the boomerang uniformity}  of $f$, or we call $f$ is a {$\delta_f$-uniform BCT}  function.
\end{Def}
  We note that Definition \ref{BCT-def-Fq} is only suitable for invertible functions, i.e., permutations.}
According to the definition of BCT and the boomerang uniformity, for a permutation $f(x)$ over $\gf_{2^n}$, it is necessary to obtain the compositional inverse $f^{-1}(x)$ of $f(x)$ over $\gf_{2^n}$ if we want to compute  BCT and the boomerang uniformity of $f(x)$. However, given a permutation polynomial $f(x)$ over $\gf_{p^n}$, it is a hard problem to compute the compositional inverse with explicit form of $f(x)$ over $\gf_{p^n}$. Besides \mkq{many} classical classes such as monomials, linearized polynomials, and Dickson polynomials, there are few classes of permutation polynomials whose compositional inverses have been obtained in explicit forms. Some results about compositional inverses can be referred to \cite{TW2014,LQW2018,ZYP2015,WL2013}. In addition, it is general that the compositional inverse of a permutation polynomial $f$ with a simple form has a complex form, increasing the difficulty of computing  BCT and the boomerang uniformity  of $f$. Therefore, it seems interesting and  meaningful to compute  BCT and the boomerang uniformity of $f(x)$ without $f^{-1}(x)$. 

\mkq{In the following, we present an equivalent formula to compute BCT and the boomerang uniformity without knowing $f^{-1}(x)$ and $f(x)$ simultaneously.
 
Let $a,b \in \gf_{2^n}^{*}$ and  $f(x)$ be a permutation polynomial over $\gf_{2^n}$.  Let $y=x+b$ in (\ref{T(a,b)}). 
Then the following equation system
\begin{equation}
\label{BCT-equ-sys}
\left\{
\begin{array}{lr}
	f^{-1}(f(x)+a)+ f^{-1}(f(y)+a) = b, &  \\
x + y = b. &
\end{array}
\right.
\end{equation}
has $T(a,b)$  solutions in $\gf_{2^n}\times \gf_{2^n}$.
Furthermore, let $X=f(x)$ and $Y=f(y)$. Then we have 
\begin{equation}
\label{BCT-equ-sys-1}
\left\{
\begin{array}{lr}
	f^{-1}(X+a) + f^{-1}(Y+a) = b ,  \\
  f^{-1}(X) + f^{-1}(Y) = b    
\end{array}
\right.
\end{equation}
from  the equation system (\ref{BCT-equ-sys}) and the numbers of solutions of equation systems (\ref{BCT-equ-sys}) and (\ref{BCT-equ-sys-1}) are the same since $f$ is a permutation polynomial over $\gf_{2^n}$. Hence, it is sufficient to compute the solutions of the equation system (\ref{BCT-equ-sys-1}) if we want to obtain  BCT and the boomerang uniformity of a given permutation polynomial $f(x)\in\gf_{2^n}[x]$. } 

\mkq{\begin{Th}
	\label{BCT-def-2}
	Let $f(x)\in\gf_{2^n}[x]$ be a permutation polynomial over $\gf_{2^n}$, $f^{-1}(x)$ be the compositional inverse of $f(x)$ over $\gf_{2^n}$ and $a,b\in\gf_{2^n}$. Then the BCT of $f(x)$ can be given by a ${2^n}\times {2^n}$ table $T$, in which the entry $T(a,b)$ for the $(a,b)$ position is given by the number of solutions of the equation system (\ref{BCT-equ-sys-1}). 
\end{Th}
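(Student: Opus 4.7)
The plan is to show the two counting problems are related by two successive bijections, so that the cardinalities agree. First I would unpack the original definition (\ref{T(a,b)}), which counts elements $x\in\gf_{2^n}$ satisfying
\[
f^{-1}(f(x)+a)+f^{-1}(f(x+b)+a)=b.
\]
The natural first step is to introduce the auxiliary variable $y:=x+b$. Each $x$ in the original count corresponds to a unique ordered pair $(x,y)\in\gf_{2^n}\times\gf_{2^n}$ satisfying the system (\ref{BCT-equ-sys}), and conversely any such $(x,y)$ determines a unique $x$ with $y=x+b$. So $T(a,b)$ equals the number of pairs $(x,y)$ solving (\ref{BCT-equ-sys}).

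Next I would perform the change of variables $X:=f(x)$, $Y:=f(y)$. Since $f$ is a permutation of $\gf_{2^n}$, the map $(x,y)\mapsto(X,Y)=(f(x),f(y))$ is a bijection of $\gf_{2^n}\times\gf_{2^n}$. Under this substitution, the first equation of (\ref{BCT-equ-sys}) becomes directly $f^{-1}(X+a)+f^{-1}(Y+a)=b$, while the second equation $x+y=b$ becomes $f^{-1}(X)+f^{-1}(Y)=b$ (using $x=f^{-1}(X)$, $y=f^{-1}(Y)$). Hence the bijection sends solutions of (\ref{BCT-equ-sys}) to solutions of (\ref{BCT-equ-sys-1}), and the numbers of solutions of the two systems coincide. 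Combining both bijections, $T(a,b)$ equals the number of solutions $(X,Y)\in\gf_{2^n}\times\gf_{2^n}$ of (\ref{BCT-equ-sys-1}), which is the claim.

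There is no real analytic or algebraic obstacle here; the content is essentially that $f$ being a permutation guarantees the substitution $(x,y)\mapsto(f(x),f(y))$ preserves solution counts. The only thing to be careful about is making sure no multiplicities are introduced or lost: each $x$ in the original set corresponds to exactly one $(x,y)$ with $y=x+b$, and each such $(x,y)$ corresponds to exactly one $(X,Y)$ because $f$ is invertible. So the proof is essentially a rigorous write-up of the two bijections sketched in the paragraph preceding the theorem statement.
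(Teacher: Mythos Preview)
Your proposal is correct and follows exactly the same approach as the paper: the paper's argument, given in the paragraph immediately preceding the theorem, performs precisely the two substitutions $y=x+b$ and $(X,Y)=(f(x),f(y))$ and appeals to $f$ being a permutation to conclude that the solution counts agree. Your write-up is simply a more explicit version of that same bijective argument.
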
}

Let $f(x)\in\gf_{2^n}[x]$ be a permutation polynomial over $\gf_{2^n}$, $f^{-1}(x)$ be the compositional inverse of $f(x)$ over $\gf_{2^n}$ and $a,b\in\gf_{2^n}$. Assume that $T$ and $T^{'}$ are the BCTs of $f(x)$ and $f^{-1}(x)$, respectively.  From \cite[Proposition 2]{BC2018}, we know that for any $a,b\in\gf_{2^n}^{*}$, $T(a,b)=T^{'}(b,a)$. Together with Theorem \ref{BCT-def-2}, we have
\mkq{\begin{Th}
		\label{boomerang-def}
		Let  $f(x)$ be a permutation polynomial over  $\gf_{2^n}$. Then the boomerang uniformity of $f(x)$, given by $\delta_f$, is the maximum of numbers of solutions in $\gf_{2^n}\times\gf_{2^n}$ of the following equation system		
   \begin{equation}
\label{boomerang-uniformity}
\left\{
\begin{array}{lr}
f(x+a) + f(y+a) = b,  \\
f(x) + f(y) = b 
\end{array}
\right.
\end{equation}
      for any $a,b\in\gf_{2^n}^{*}$. 
\end{Th}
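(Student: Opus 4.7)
The plan is to derive Theorem \ref{boomerang-def} by combining Theorem \ref{BCT-def-2} applied to the compositional inverse $f^{-1}$ with the symmetry relation $T(a,b)=T^{'}(b,a)$ between the BCTs of $f$ and of $f^{-1}$, already quoted from \cite[Proposition 2]{BC2018} in the paragraph immediately preceding the statement. In short, the theorem is essentially a dualization of the previous one, and the argument consists of bookkeeping rather than new content.

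First, I will apply Theorem \ref{BCT-def-2} with the roles of $f$ and $f^{-1}$ exchanged. Since $f^{-1}$ is itself a permutation of $\gf_{2^n}$ whose compositional inverse is $f$, the BCT entry $T^{'}(a^{'},b^{'})$ of $f^{-1}$ equals the number of pairs $(X,Y)\in\gf_{2^n}\times\gf_{2^n}$ satisfying
\[
\left\{
\begin{array}{l}
f(X+a^{'})+f(Y+a^{'}) = b^{'},\\
f(X)+f(Y) = b^{'}.
\end{array}
\right.
\]
Next, I will specialize to $(a^{'},b^{'}) = (b,a)$ and invoke $T(a,b) = T^{'}(b,a)$ to conclude that $T(a,b)$ coincides with the number of solutions $(X,Y)\in\gf_{2^n}\times\gf_{2^n}$ of
\[
\left\{
\begin{array}{l}
f(X+b)+f(Y+b) = a,\\
f(X)+f(Y) = a.
\end{array}
\right.
\]
Taking the maximum over $a,b\in\gf_{2^n}^{*}$ and then relabelling the dummy parameters (swapping the roles of $a$ and $b$, which preserves the maximum because it is taken symmetrically over both coordinates) will yield exactly the formula for $\delta_f$ announced in the theorem.

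Because both ingredients are already in hand, no real difficulty is expected. The only point that deserves a little care is the final exchange of $a$ and $b$: it must be justified by the symmetry of the outer maximum over $\gf_{2^n}^{*}\times\gf_{2^n}^{*}$ rather than by any pointwise identity between the two equation systems, since swapping $a$ and $b$ changes the system itself but not the resulting extremal value.
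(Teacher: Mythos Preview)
Your proposal is correct and follows exactly the same approach as the paper: apply Theorem~\ref{BCT-def-2} to $f^{-1}$, invoke the symmetry $T(a,b)=T'(b,a)$ from \cite[Proposition~2]{BC2018}, and then relabel $a\leftrightarrow b$ in the outer maximum. This is precisely the argument sketched in the paragraph immediately preceding Theorem~\ref{boomerang-def}.
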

}

\begin{Rem}

\mkq{\emph{  About out new method to compute BCT and the boomerang uniformity, the first advantage is that we do not have to figure out the compositional inverse of $f(x)$. In addition, in the original definition about BCT and the boomerang uniformity of $f(x)$ over $\gf_{2}^n$  by Cid et al. \cite{BCT2018}, i.e., Definition \ref{BCT-def-Fq}, it is assumed that  $f(x)$ is a permutation. Nevertheless, it is clear from Theorem \ref{boomerang-def} that the condition with permutation property is not necessary. Finally, the boomerang uniformity can be generalized to any vector space, i.e.,  $\gf_{p}^n$ with $p$ odd, where we should notice  that the equation system (\ref{boomerang-uniformity}) becomes }
\begin{equation*}
\left\{
\begin{array}{lr}
	f(y+a) - f(x+a) = b,  \\
	f(y) - f(x) = b.   
\end{array}
\right.
\end{equation*}
\emph{ However, in this paper, we mainly consider the boomerang uniformity of permutations over $\gf_{2^n}$.} }

\end{Rem}

\begin{Rem}
	\label{remark}
\emph{If $(x_0,y_0)$ is a solution in $\gf_{2^n}\times\gf_{2^n}$ of the equation system (\ref{boomerang-uniformity}), then $y_0\neq x_0$ since $b\neq 0$. Hence $(y_0,x_0)$ must be a distinct solution of  (\ref{boomerang-uniformity}). Thus for any $f$, the elements in BCT of $f$ must be even. Moreover, $(x_0+a,y_0+a)$ and $(y_0+a,x_0+a)$ are another two solutions of  (\ref{boomerang-uniformity}) if $x_0+a\neq y_0$. That is to say,  $(x_0,y_0), (y_0,x_0), (x_0+a,y_0+a)$ and $(y_0+a,x_0+a)$ are four different solutions of  (\ref{boomerang-uniformity}) if  $x_0+a\neq y_0$ and the boomerang uniformity of $f$, we think, is probably more than four. In other words, constructing $4$-uniform BCT permutations seems not so easy. } 

\mkq{\emph{From Theorem \ref{boomerang-def}, it is easy to see that BCT is an affine equivalent invariant, but is nor an EA and CCZ equivalence invariant, as already shown in \cite{BC2018}. Recall that two functions $f$ and $f^{'}$ from $\gf_2^n$ to itself are called EA equivalent if $f^{'}=A_1\circ f\circ A_2+A$, where $A$ is affine and $A_1,A_2$ are affine permutations. In particular, when $A=0$, $f$ and $f^{'}$ are called affine equivalent. Assume that $A$ is affine and $A_1,A_2$ are affine permutations. If $f'(x)+f'(y)=b$, then we have $A_1\circ f\circ A_2(x)+A(x) + A_1\circ f\circ A_2(y)+A(y) =b,$ and $f\circ A_2(x)  + f\circ A_2(y)  + L_1^{-1}\circ L(x+y) =L_1^{-1}(b)$, which is equivalent to  $f(X) + f(Y) = L_1^{-1}(b)$ if $L = 0$, where $L_1$ and $L$ are the linear part of $A_1$ and $A$, respectively, $X=A_2(x)$ and $Y=A_2(y)$. Therefore, BCT is an affine equivalent invariant. However, it is clear that one can not build a similar equation if $L\neq 0$. Hence BCT is nor an EA or CCZ equivalence invariant. }}

\end{Rem}

%
%

At the end of this section, consider the boomerang uniformities of permutation polynomials with special forms, such as monomials and quadratic permutations.

\begin{Def}
	\cite{Bud2014}
    A function $f$  from $\gf_{p^n}$ to itself is 
	\begin{enumerate}
		\item linear if $$f(x)=\sum_{0\le i < n}a_i x^{p^i}, \qquad  a_i\in\gf_{p^n};$$
		\item affine if $f$ is a sum of a linear function and a constant;
		\item Dembowski-Ostrom polynomial (DO polynomial) if 
		  $$f(x)=\sum_{0\le k< j \leq n-1}a_{kj}x^{p^k+p^j}, ~~~~~~~ a_{kj}\in\gf_{p^n};$$
		\item quadratic if it is a sum of a DO polynomial and an affine function.
	\end{enumerate}
\end{Def} 

\begin{Prop}
	\label{mon-sim}
	Let $f(x)=x^d\in\gf_{2^n}[x]$. Then the boomerang uniformity  of $f(x)$ is $\delta_{f}=\max \limits_{b\in\gf_{2^n}^{*}} T(1,b)$, where $T(1,b)$ is the number of solutions over $\gf_{2^n}\times \gf_{2^n}$ of the following equation system 
	\begin{subequations}  
		\begin{empheq}[left={\empheqlbrace\,}]{align}
		f(x+1) &+ f(y+1) = b    \notag \\ 
f(x)& + f(y) = b.   ~~~~ ~  \notag 
		\end{empheq}         
	\end{subequations}
\end{Prop}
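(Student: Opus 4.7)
The plan is to apply Theorem \ref{boomerang-def} and then exploit the multiplicative homogeneity of the monomial $f(x) = x^d$ by a simple rescaling argument.

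First, I would invoke Theorem \ref{boomerang-def}, which says $\delta_f$ equals the maximum, over $a, b \in \gf_{2^n}^*$, of the number of solutions $(x,y) \in \gf_{2^n} \times \gf_{2^n}$ of the system
\[
\begin{cases} (x+a)^d + (y+a)^d = b, \\ x^d + y^d = b. \end{cases}
\]
The goal is to show that the count for parameter $(a,b)$ coincides with the count for parameter $(1, b/a^d)$, so that only the row $a=1$ needs to be considered in the maximization.

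Next, since $a \in \gf_{2^n}^*$, the map $(x,y) \mapsto (x',y') = (x/a, y/a)$ is a bijection on $\gf_{2^n} \times \gf_{2^n}$. Substituting $x = ax'$, $y = ay'$ and using $(au)^d = a^d u^d$, the first equation becomes $a^d\bigl((x'+1)^d + (y'+1)^d\bigr) = b$ and the second $a^d(x'^d + y'^d) = b$. Dividing by $a^d$ (again using $a \neq 0$) yields the equivalent system
\[
\begin{cases} (x'+1)^d + (y'+1)^d = b/a^d, \\ x'^d + y'^d = b/a^d, \end{cases}
\]
whose number of solutions is exactly $T(1, b/a^d)$ by Theorem \ref{BCT-def-2} (or rather, by the same application of Theorem \ref{boomerang-def}). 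Hence $T(a,b) = T(1, b/a^d)$ for all $a \in \gf_{2^n}^*$ and $b \in \gf_{2^n}^*$.

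Finally, as $(a,b)$ ranges over $\gf_{2^n}^* \times \gf_{2^n}^*$, the quantity $b/a^d$ ranges over $\gf_{2^n}^*$, so
\[
\delta_f = \max_{a,b \in \gf_{2^n}^*} T(a,b) = \max_{a,b \in \gf_{2^n}^*} T(1, b/a^d) = \max_{b \in \gf_{2^n}^*} T(1,b),
\]
which is the claimed equality. There is essentially no obstacle here: the only things to check carefully are that the rescaling is a bijection (immediate since $a \neq 0$) and that $b/a^d$ sweeps out all of $\gf_{2^n}^*$ (immediate since $a^d \in \gf_{2^n}^*$). The statement really just records the fact that, for monomial permutations, all BCT rows indexed by $a \in \gf_{2^n}^*$ are permutations of one another.
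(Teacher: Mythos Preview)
Your proof is correct and follows exactly the same approach as the paper: substitute $x=aX$, $y=aY$ and use the homogeneity of $x^d$ together with Theorem~\ref{boomerang-def}. The paper's own proof is a single line to this effect, and your write-up simply spells out the details.
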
  

\begin{proof}
\mkq{	Let $x=aX$ and $y=aY$. Then the result follows directly  from Theorem  \ref{boomerang-def} and $f(x)=x^d.$}
\end{proof}

\begin{Prop}
	\label{Quadratic}
	Let $f(x)$ be a quadratic differentially $\Delta$-uniform permutation polynomial over $\gf_{2^n}$. Then the boomerang uniformity $\delta_{f}$ of $f$ satisfies $\Delta \le \delta_{f}\le \Delta(\Delta-1)$. Particularly, if $\Delta=2$, then $\delta_{f}=2$;	if $\Delta=4$, $4\le\delta_{f}\le 12.$ 
\end{Prop}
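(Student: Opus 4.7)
My plan is to exploit the linearity of the derivative $D_u f(x) := f(x+u)+f(x)$ for quadratic $f$ to reduce the boomerang system to a DDT-style count. First, I would rewrite the system in (\ref{boomerang-uniformity}) by introducing the substitution $u = x+y$, so $y = x+u$. The second equation becomes $D_u f(x) = b$ and the first becomes $D_u f(x+a) = b$. Since $f$ is quadratic, we can write $f(x+u)+f(x)+f(u)+f(0) = B(x,u)$ for a symmetric bilinear form $B$, and therefore $D_u f(x) = B(x,u) + f(u) + f(0)$ is affine in $x$ for every fixed $u$. Subtracting the two equations gives $D_u f(x+a) + D_u f(x) = B(a,u) = 0$. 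Conversely, once $B(a,u) = 0$ the two equations coincide. Hence the system is equivalent to
\begin{equation*}
\left\{
\begin{array}{l}
B(a,u) = 0, \\
D_u f(x) = b,
\end{array}
\right.
\end{equation*}
with $y = x+u$. Note also that $u = 0$ forces $b=0$, so only $u \ne 0$ contributes.

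For the upper bound, I would first observe that for any $c \ne 0$ the map $x \mapsto D_c f(x)$ is affine with kernel $\ker B(\cdot,c)$, so every nonempty fibre has exactly $|\ker B(\cdot,c)|$ preimages; this common value is $\Delta_c \le \Delta$. By the symmetry of $B$, the admissible set $K_a := \{u : B(a,u) = 0\}$ equals $\ker B(\cdot, a)$, whose cardinality is $\Delta_a \le \Delta$. Thus at most $\Delta_a - 1 \le \Delta - 1$ nonzero values of $u$ satisfy the first condition, and for each such $u$ the number of $x$ with $D_u f(x) = b$ is at most $\Delta_u \le \Delta$. Combining,
\[
T(a,b) \;=\; \sum_{u \in K_a\setminus\{0\}}\bigl|\{x : D_u f(x) = b\}\bigr| \;\le\; (\Delta-1)\Delta,
\]
which yields $\delta_f \le \Delta(\Delta-1)$.

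For the lower bound, I would pick $a \ne 0$ achieving $\Delta_a = \Delta$ and any $b$ in the (affine) image of $D_a f$; because $f$ is a permutation, $0$ is not in that image, so such a $b \ne 0$ exists. Taking $u = a$ gives $B(a,a) = 0$ automatically (all diagonal terms of $B$ vanish in characteristic $2$), and the number of $x$ with $D_a f(x) = b$ is exactly $\Delta$. Hence $T(a,b) \ge \Delta$, so $\delta_f \ge \Delta$. The specializations are immediate: $\Delta=2$ forces $\delta_f=2$, and $\Delta=4$ gives $4 \le \delta_f \le 12$. The only delicate point to verify carefully is the identification $|K_a|=\Delta_a$ via the symmetry of $B$ (this is what prevents the trivial bound $|K_a|\le 2^{n-1}$ from being used and makes the estimate tight), together with handling any linear/constant part of $f$, which only shifts $D_u f$ by a linear term in $u$ and does not alter the bilinear form $B$.
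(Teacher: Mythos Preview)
Your proof is correct and follows essentially the same strategy as the paper: the paper rewrites (\ref{quad-1:1A}) as $D_a f(x) = D_a f(y)$, uses the linearity of $D_a f$ to conclude $y = x + \alpha_i$ with at most $\Delta-1$ nonzero choices of $\alpha_i$, and then bounds each resulting equation $D_{\alpha_i} f(x) = b$ by $\Delta$, which is precisely your argument with the bilinear form $B$ left implicit. The paper omits the lower bound (relying on the general inequality $\delta_f \ge \Delta$ from \cite{BCT2018}), whereas you supply a direct argument via the choice $u = a$.
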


\begin{proof}
 we only prove the first inequality and it suffices to prove its right part. From \mkq{Theorem \ref{boomerang-def}}, we know that $\delta_{f}=\max \limits_{a,b\in\gf_{2^n}^{*}} T(a,b)$, where $T(a,b)$ is the number of solutions over $\gf_{2^n}\times \gf_{2^n}$ of the following equation system                                               
	\begin{subequations} 	\label{quad-1}
		\renewcommand\theequation{\theparentequation.\arabic{equation}}     
		\begin{empheq}[left={\empheqlbrace\,}]{align}
		f(x+a) + f(y&+a) = f(x)+f(y)    \label{quad-1:1A} \\ 
f(x)& + f(y) = b.       \label{quad-1:1B}
		\end{empheq}
	\end{subequations}

  Let $D_{a}f(x)= f(x+a)+f(x)$. Then  for any $a\in\gf_{2^n}^{*}$, $D_{a}f(x)$ is linear since $f(x)$ is quadratic.  Moreover,  $D_af(x)$ is $\Delta$-to-$1$ since the differential uniformity of $f(x)$ is $\Delta$. Together with  (\ref{quad-1:1A}), i.e., $D_af(x)=D_af(y)$, we have $y=x+\alpha_i$, where $i=1,\cdots,\Delta-1$ and $\alpha_i\neq\alpha_j$ for any $i\neq j$. For any $1\le i\le \Delta-1$,  (\ref{quad-1:1B}), i.e., $f(x)+f(x+\alpha_i)=b$ has at most $\Delta$ solutions in $\gf_{2^n}$. Therefore, the equation system (\ref{quad-1}) has at most $\Delta(\Delta-1)$ solutions in $\gf_{2^n}\times\gf_{2^n}$. That is to say, $\delta_f\le \Delta(\Delta-1)$. We finish the proof.
\end{proof}

\begin{Rem}
	\emph{\mkq{Proposition \ref{Quadratic} is a generalization of \cite[Proposition 7]{BC2018}, which only gave the result for the case $\Delta=4$. Moreover, even for this case, our proof seems to be much simple.  }}
\end{Rem}

\section{Characterizations of $\delta_f$-uniform BCT functions by the Walsh transform}

It is well known that \mkq{ for any $u\in\gf_{2}^n$ and $v\in\gf_{2}^n\backslash\{0\}$}, the Walsh transform of $f$ \mkq{from $\gf_2^n$ to itself} is 
$$W_f(u,v)=\sum_{x\in\gf_{2}^n}(-1)^{u\cdot x + v\cdot f(x)}.$$
In this section, we consider the characterizations of $\delta_{f}$-uniform BCT functions by the Walsh transform. The main idea is from Carlet \cite{Carlet2018} who characterized the differential uniformity of vectorial functions by means of the Walsh transform. Let $$T(a,b)=\#\{ (x,y)\in\gf_2^n\times \gf_2^n | f(x+a)+f(y+a)=b ~~\text{and}~~ f(x)+f(y)=b  \}.$$
It is clear that $f$ is \mkq{an at most} $\delta_{f}$-uniform BCT function if and only if, for any $a,b\in\gf_{2}^n\backslash\{0\}$, we have $T(a,b)\in\{0,2, \cdots, \delta_{f}\}$. Let $\phi(x)=\sum_{j\ge0}A_jx^j$ be any polynomial over $\mathbb{R}$ such that $\phi(x)=0$ for $x=0,2,\cdots,\delta_{f}$ and $\phi(x)>0$ for every even $x\in\{\delta_{f}+2, \cdots, 2^n  \}$. Hence for any $f$ and $a,b\in\gf_{2}^n\backslash\{0\}$, we have 
$$\sum_{j\ge0}A_j (T(a,b))^j\ge0, $$
and $f$ is \mkq{an at most} $\delta_{f}$-uniform BCT function if and only if this inequality is an equality for any $a,b\in\gf_{2}^n\backslash\{0\}$. Furthermore, for any $f$, we have 
$$\sum_{j\ge0}A_j \sum_{a,b\in\gf_{2}^n\backslash\{0\}} (T(a,b))^j\ge 0,  $$
and $f$ is \mkq{an at most}  $\delta_{f}$-uniform BCT function if and only if this inequality is an equality. We now characterize $\delta_f$-uniform BCT functions by the Walsh transform. 

\begin{Lemma}
	\label{lemma-T(a,b)-walsh}
	For any $j\ge1$, we have 
	\begin{eqnarray*}
		&  & \sum_{a,b\in\gf_{2}^n\backslash\{0\}}\left(T(a,b)\right)^j  \\
		& = & 2^{2n-4nj}\sum_{{\mbox{\tiny$\begin{array}{c}
					\alpha_1,\cdots,\alpha_j,\beta_1,\cdots,\beta_j\in\gf_{2}^n \\
					\gamma_1,\cdots,\gamma_j,\eta_1,\cdots,\eta_j\in\gf_{2}^n\\
					\sum_{i=1}^{j}(\alpha_i+\beta_i)=0, \sum_{i=1}^{j}(\gamma_i+\eta_i)=0\\
					\end{array}$}} } \prod_{i=1}^{j}W_f(\gamma_i,\alpha_i)W_f(\eta_i,\alpha_i)W_f(\gamma_i,\beta_i)W_f(\eta_i,\beta_i)-2^{nj}(2^{n+1}-1).
	\end{eqnarray*}
\end{Lemma}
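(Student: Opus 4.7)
The plan is to first derive a Walsh-transform expansion for $T(a,b)$ itself, then raise to the $j$-th power, sum over all $(a,b)\in\gf_{2}^n\times\gf_{2}^n$ using character orthogonality, and finally subtract the boundary contributions coming from $a=0$ or $b=0$.

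First, I would introduce four auxiliary variables $z_1,z_2,z_3,z_4\in\gf_{2}^n$ encoding $x+a$, $y+a$, $x$, $y$ respectively, so that by Theorem \ref{boomerang-def} the count $T(a,b)$ rewrites as
$$T(a,b)=\sum_{z_1,z_2,z_3,z_4\in\gf_{2}^n}\mathbf{1}_{z_1+z_3=a}\,\mathbf{1}_{z_2+z_4=a}\,\mathbf{1}_{f(z_1)+f(z_2)=b}\,\mathbf{1}_{f(z_3)+f(z_4)=b}.$$
Applying $\mathbf{1}_{u=0}=2^{-n}\sum_{v\in\gf_{2}^n}(-1)^{v\cdot u}$ to each of the four indicators (using dual variables $\gamma,\eta$ for the two $a$-constraints and $\alpha,\beta$ for the two $b$-constraints), the exponent regrouped by $z_i$ splits as $\gamma\cdot z_1+\alpha\cdot f(z_1)$ for $z_1$, $\eta\cdot z_2+\alpha\cdot f(z_2)$ for $z_2$, and analogously for $z_3,z_4$. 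Hence the four inner sums decouple into exactly four Walsh transforms, giving
$$T(a,b)=\frac{1}{2^{4n}}\sum_{\alpha,\beta,\gamma,\eta\in\gf_{2}^n}(-1)^{(\gamma+\eta)\cdot a+(\alpha+\beta)\cdot b}W_f(\gamma,\alpha)W_f(\eta,\alpha)W_f(\gamma,\beta)W_f(\eta,\beta).$$

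Next, I would raise this identity to the $j$-th power by attaching an index $i=1,\ldots,j$ to each of $\alpha,\beta,\gamma,\eta$, and then sum over $(a,b)\in\gf_{2}^n\times\gf_{2}^n$. The orthogonality $\sum_{a\in\gf_{2}^n}(-1)^{a\cdot u}=2^n\mathbf{1}_{u=0}$ collapses the $a$- and $b$-sums, forcing the two constraints $\sum_i(\gamma_i+\eta_i)=0$ and $\sum_i(\alpha_i+\beta_i)=0$ and producing a prefactor $2^{2n}\cdot 2^{-4nj}$. This yields
$$\sum_{a,b\in\gf_{2}^n}T(a,b)^j=2^{2n-4nj}\sum_{\substack{\alpha_i,\beta_i,\gamma_i,\eta_i\in\gf_{2}^n\\ \sum_i(\alpha_i+\beta_i)=0\\ \sum_i(\gamma_i+\eta_i)=0}}\prod_{i=1}^{j}W_f(\gamma_i,\alpha_i)W_f(\eta_i,\alpha_i)W_f(\gamma_i,\beta_i)W_f(\eta_i,\beta_i),$$
which is precisely the first term appearing in the statement.

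Finally, I would subtract off the pairs with $a=0$ or $b=0$. Since $f$ is a permutation, directly from the definition $T(0,b)=\#\{(x,y):f(x)+f(y)=b\}=2^n$ for every $b\in\gf_{2}^n$, and $T(a,0)=\#\{(x,y):f(x)=f(y)\text{ and }f(x+a)=f(y+a)\}=\#\{(x,x):x\in\gf_{2}^n\}=2^n$ for every $a$. There are $1+2(2^n-1)=2^{n+1}-1$ such boundary pairs, each contributing $(2^n)^j=2^{nj}$, so removing the total $2^{nj}(2^{n+1}-1)$ from $\sum_{a,b\in\gf_{2}^n}T(a,b)^j$ yields the restricted sum $\sum_{a,b\in\gf_{2}^n\setminus\{0\}}T(a,b)^j$ in the form asserted by the lemma.

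The main obstacle will be the initial four-variable decomposition: introducing $z_1,\ldots,z_4$ so that the two original constraints involving $f$ split into four delta conditions, paired symmetrically between $a$ and $b$, is what allows the Fourier expansion to factor into exactly the product $W_f(\gamma,\alpha)W_f(\eta,\alpha)W_f(\gamma,\beta)W_f(\eta,\beta)$ rather than a more entangled expression. Once this step is correctly set up, the remaining two steps are routine applications of character orthogonality and careful boundary bookkeeping.
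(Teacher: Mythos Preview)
Your proof is correct and follows essentially the same route as the paper: write $T(a,b)$ via four delta conditions, Fourier-expand each to obtain the product $W_f(\gamma,\alpha)W_f(\eta,\alpha)W_f(\gamma,\beta)W_f(\eta,\beta)$, raise to the $j$-th power and sum over $(a,b)$ using orthogonality, then subtract the boundary terms. The only cosmetic difference is that the paper introduces the dual variables in two stages (first $\alpha,\beta$ for the $b$-constraints, then $\gamma,\eta$ when decoupling $x+a,y+a$ into fresh variables $z,w$), whereas you set up all four indicators at once; the resulting Walsh expansion of $T(a,b)$ is identical.
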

\begin{proof}
	\mkq{	Firstly, it is clear that $$T(a,b)=2^{-2n}\sum_{\alpha,\beta,x,y\in\gf_{2}^n}(-1)^{\alpha\cdot(f(x+a)+f(y+a)+b)+\beta\cdot(f(x)+f(y)+b)}$$  
	since $\sum_{\alpha\in\gf_{2}^n}(-1)^{\alpha\cdot(f(x+a)+f(y+a)+b)}$ (or $\sum_{\beta\in\gf_{2}^n}(-1)^{\beta\cdot(f(x)+f(y)+b)}$) is nonzero for $f(x+a)+f(y+a)=b$ (resp. $f(x)+f(y)=b$) only and takes  $2^n$.
 Moreover, we have 
		\begin{eqnarray*}
		T(a,b) &=& 2^{-4n} \sum_{{\mbox{\tiny$\begin{array}{c}
					\alpha,\beta,\gamma,\eta\in\gf_{2}^n \\
					x,y,z,w\in\gf_{2}^n\\
					\end{array}$}} } (-1)^{\alpha\cdot(f(z)+f(w)+b)+\beta\cdot(f(x)+f(y)+b)+\gamma\cdot(z+x+a)+\eta\cdot(w+y+a)} \\
			   &=& 2^{-4n}\sum_{{\mbox{\tiny$\begin{array}{c}
			   			\alpha,\beta,\gamma,\eta\in\gf_{2}^n \\
			   			x,y,z,w\in\gf_{2}^n\\
			   			\end{array}$}}} (-1)^{\gamma\cdot z+ \alpha\cdot f(z)} (-1)^{\eta\cdot w+\alpha\cdot f(w)} (-1)^{\gamma\cdot x+\beta\cdot f(x)} (-1)^{\eta\cdot y+\beta\cdot f(y)} (-1)^{(\alpha +\beta)\cdot b+(\gamma+\eta)\cdot a} \\
		   	   &=& 2^{-4n} \sum_{	\alpha,\beta,\gamma,\eta\in\gf_{2}^n} W_f(\gamma,\alpha)W_f(\eta,\alpha)W_f(\gamma,\beta)W_f(\eta,\beta) (-1)^{(\alpha +\beta)\cdot  b+(\gamma+\eta)\cdot a} ,
		\end{eqnarray*}

	Similarly, for an integer $j \ge 1$, 
		$$
	\sum_{a,b\in\gf_{2}^n}\left(T(a,b)\right)^j=2^{-2nj}\sum_{a,b\in\gf_{2}^n}\sum_{{\mbox{\tiny$\begin{array}{c}
				\alpha_1,\cdots,\alpha_j,\beta_1,\cdots,\beta_j\in\gf_{2}^n \\
				x_1,\cdots,x_j,y_1,\cdots,y_j\in\gf_{2}^n
				\end{array}$}} }(-1)^{\sum_{i=1}^{j}\alpha_i\cdot (f(x_i+a)+f(y_i+a)+b)+\beta_i\cdot (f(x_i)+f(y_i)+b)}
	$$
	and 
	\begin{eqnarray*}
		&  & 2^{4nj}\sum_{a,b \in\gf_{2}^n}\left(T(a,b)\right)^j  \\
		& = & \sum_{a,b \in\gf_{2}^n}\sum_{{\mbox{\tiny$\begin{array}{c}
					\alpha_1,\cdots,\alpha_j,\beta_1,\cdots,\beta_j\in\gf_{2}^n \\
					x_1,\cdots,x_j,y_1,\cdots,y_j\in\gf_{2}^n\\
					\gamma_1,\cdots,\gamma_j,\eta_1,\cdots,\eta_j\in\gf_{2}^n\\
					z_1,\cdots,z_j,w_1,\cdots,w_j\in\gf_{2}^n\\
					\end{array}$}} }(-1)^{\sum_{i=1}^{j}\alpha_i\cdot(f(z_i)+f(w_i)+b)+\beta_i\cdot(f(x_i)+f(y_i)+b)+\gamma_i\cdot(z_i+x_i+a)+\eta_i\cdot(w_i+y_i+a)} \\
		& = & \sum_{{\mbox{\tiny$\begin{array}{c}
					\alpha_1,\cdots,\alpha_j,\beta_1,\cdots,\beta_j\in\gf_{2}^n \\
					\gamma_1,\cdots,\gamma_j,\eta_1,\cdots,\eta_j\in\gf_{2}^n\\
					\end{array}$}} } \prod_{i=1}^{j}W_f(\gamma_i,\alpha_i)W_f(\eta_i,\alpha_i)W_f(\gamma_i,\beta_i)W_f(\eta_i,\beta_i) \left(\sum_{a,b \in\gf_{2}^n}(-1)^{\sum_{i=1}^{j}(\alpha_i+\beta_i)\cdot b+(\gamma_i+\eta_i)\cdot a}\right) .
	\end{eqnarray*}
	Because $\sum_{a,b \in\gf_{2}^n}(-1)^{\sum_{i=1}^{j}(\alpha_i+\beta_i)\cdot b+(\gamma_i+\eta_i)\cdot a}$ is nonzero only when $\sum_{i=1}^{j}(\alpha_i+\beta_i)=0 $ and $\sum_{i=1}^{j}(\gamma_i+\eta_i)=0 $ hold at the same time and takes $2^{2n}$, we have 
\begin{eqnarray*}
	&  & 2^{4nj}\sum_{a,b \in\gf_{2}^n}\left(T(a,b)\right)^j  \\
	& = & 2^{2n}\sum_{{\mbox{\tiny$\begin{array}{c}
				\alpha_1,\cdots,\alpha_j,\beta_1,\cdots,\beta_j\in\gf_{2}^n \\
				\gamma_1,\cdots,\gamma_j,\eta_1,\cdots,\eta_j\in\gf_{2}^n\\
				\sum_{i=1}^{j}(\alpha_i+\beta_i)=0, \sum_{i=1}^{j}(\gamma_i+\eta_i)=0\\
				\end{array}$}} } \prod_{i=1}^{j}W_f(\gamma_i,\alpha_i)W_f(\eta_i,\alpha_i)W_f(\gamma_i,\beta_i)W_f(\eta_i,\beta_i)
\end{eqnarray*}
On the other hand, $$\sum_{a,b \in\gf_{2}^n}\left(T(a,b)\right)^j=\sum_{a,b \in\gf_{2}^n\backslash\{0\}}\left(T(a,b)\right)^j+2^{nj}(2^{n+1}-1).$$
Thus 
\begin{eqnarray*}
	&  & \sum_{a,b \in\gf_{2}^n\backslash\{0\}}\left(T(a,b)\right)^j  \\
	& = & 2^{2n-4nj}\sum_{{\mbox{\tiny$\begin{array}{c}
				\alpha_1,\cdots,\alpha_j,\beta_1,\cdots,\beta_j\in\gf_{2}^n \\
				\gamma_1,\cdots,\gamma_j,\eta_1,\cdots,\eta_j\in\gf_{2}^n\\
				\sum_{i=1}^{j}(\alpha_i+\beta_i)=0, \sum_{i=1}^{j}(\gamma_i+\eta_i)=0\\
				\end{array}$}} } \prod_{i=1}^{j}W_f(\gamma_i,\alpha_i)W_f(\eta_i,\alpha_i)W_f(\gamma_i,\beta_i)W_f(\eta_i,\beta_i)-2^{nj}(2^{n+1}-1).
\end{eqnarray*}
	
}
\end{proof}

Note that for $j=0$, we have $\sum_{a,b \in\gf_{2}^n\backslash\{0\}}\left(T(a,b)\right)^j=(2^n-1)^2$. Therefore, we have the following theorem.

\begin{Th}
	Let $n, \delta$ be positive integers, where $\delta$ is even, and let $f$ be any permutation over $\gf_{2^n}$. Let $\phi(x)=\sum_{j\ge0}A_jx^j$ be any polynomial over $\mathbb{R}$ such that $\phi(x)=0$ for $x=0,2,\cdots,\delta$ and $\phi(x)>0$ for every even $x\in\{\delta+2, \cdots, 2^n \}$. Then we have 
	  $$(2^n-1)^2 A_0 + \sum_{j\ge1}A_j \sum_{a,b\in\gf_{2}^n\backslash\{0\}} (T(a,b))^j\ge 0, $$
	  where $\sum_{a,b\in\gf_{2}^n\backslash\{0\}}(T(a,b))^j$ is given in Lemma \ref{lemma-T(a,b)-walsh} for any $j\ge1$.	 Furthermore, this inequality is an equality if and only if the boomerang uniformity of $f$ is $\delta$. 
\end{Th}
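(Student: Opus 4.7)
The plan is to apply the polynomial $\phi$ pointwise to the values $T(a,b)$ and sum over all nonzero pairs $(a,b)$, following the Carlet-style argument sketched in the paragraph preceding the statement.

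First, I invoke Remark \ref{remark}, which shows that every entry of the BCT is even, and therefore $T(a,b)\in\{0,2,4,\ldots,2^n\}$ for each $a,b\in\gf_{2}^n\setminus\{0\}$. By the sign conditions defining $\phi$, this forces $\phi(T(a,b))\ge 0$ pointwise, with equality if and only if $T(a,b)\in\{0,2,\ldots,\delta\}$.

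Next, I sum this pointwise inequality over $(a,b)\in(\gf_{2}^n\setminus\{0\})^2$. The result is a nonnegative real number, and it vanishes exactly when $T(a,b)\le\delta$ for every nonzero pair $(a,b)$, which is precisely the condition that the boomerang uniformity of $f$ is at most $\delta$. Then I expand $\phi$ and swap the two finite summations to obtain
$$\sum_{a,b\in\gf_{2}^n\setminus\{0\}}\phi(T(a,b)) = A_0(2^n-1)^2 + \sum_{j\ge 1}A_j\sum_{a,b\in\gf_{2}^n\setminus\{0\}}(T(a,b))^j,$$
where the $j=0$ term is simply $A_0$ times the number of pairs $(a,b)$ with $a,b\ne 0$. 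For each $j\ge 1$, the inner sum is exactly the expression already computed via the Walsh transform in Lemma \ref{lemma-T(a,b)-walsh}, so substituting that formula produces the displayed inequality, while the equality characterization is inherited from the previous step.

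There is no substantive obstacle here: the content is essentially prepackaged by Lemma \ref{lemma-T(a,b)-walsh}, the evenness observation in Remark \ref{remark}, and the defining sign pattern of $\phi$. The only small care needed is to treat the $j=0$ contribution separately, since Lemma \ref{lemma-T(a,b)-walsh} is stated only for $j\ge 1$ and the constant term $(2^n-1)^2 A_0$ must be inserted by a direct counting of pairs $(a,b)\in(\gf_{2}^n\setminus\{0\})^2$.
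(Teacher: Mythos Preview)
Your proposal is correct and follows essentially the same argument as the paper: the paper's own proof is the paragraph immediately preceding the theorem together with the one-line observation that the $j=0$ contribution equals $(2^n-1)^2 A_0$, and you have reproduced exactly that reasoning (evenness of $T(a,b)$ from Remark~\ref{remark}, pointwise nonnegativity of $\phi(T(a,b))$, summation, and separation of the $j=0$ term before invoking Lemma~\ref{lemma-T(a,b)-walsh}). Your phrasing ``at most $\delta$'' in the equality characterization is in fact the precise conclusion, matching what the paper itself writes in the preceding paragraph.
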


\subsection{Characterizations of $2$-uniform BCT functions by the Walsh transform}
Let $\phi(x)= x(x-2)=x^2-2x$, which satisfies that $\phi(x)=0$ for $x=0,2$ and $\phi(x)>0$ for every even $x\in \{4, 6, \cdots, 2^n  \}$. Then
$$\sum_{a,b\in \gf_{2}^n\backslash\{0\}}(T(a,b))^2-2\sum_{a,b\in\gf_{2}^n\backslash\{0\}}T(a,b)\ge0,$$
and $f$ is a  $2$-uniform BCT function if and only if the above inequality is an equality.

From Lemma \ref{lemma-T(a,b)-walsh}, we have 
	\begin{eqnarray*}
	&  & \sum_{a,b \in   \gf_{2}^n\backslash\{0\}}T(a,b) =  2^{-2n}\sum_{{\mbox{\tiny$\begin{array}{c}
				\alpha, \gamma\in \gf_{2}^n\\
				\end{array}$}} }W_f(\alpha,\gamma)^4-2^{n}(2^{n+1}-1).
\end{eqnarray*}
and 
	\begin{eqnarray*}
	&  & \sum_{a,b \in\gf_{2}^n\backslash\{0\}}\left(T(a,b)\right)^2  \\
	& = & 2^{-6n}\sum_{{\mbox{\tiny$\begin{array}{c}
				\alpha_1,\alpha_2,\beta_1,\beta_2\in\gf_{2}^n \\
				\gamma_1,\gamma_2,\eta_1,\eta_2\in\gf_{2}^n\\
				\sum_{i=1}^{2}(\alpha_i+\beta_i)=0, \sum_{i=1}^{2}(\gamma_i+\eta_i)=0\\
				\end{array}$}} } \prod_{i=1}^{2}W_f(\gamma_i,\alpha_i)W_f(\eta_i,\alpha_i)W_f(\gamma_i,\beta_i)W_f(\eta_i,\beta_i)-2^{2n}(2^{n+1}-1) \\	
\end{eqnarray*}

\begin{Th}
	\label{2-uniform}
	Let $f$ be any function from $\gf_{2}^n$ to itself. Then 
		\begin{eqnarray*}
		&  & \sum_{{\mbox{\tiny$\begin{array}{c}
					\alpha_1,\alpha_2,\beta_1,\beta_2\in\gf_{2}^n \\
					\gamma_1,\gamma_2,\eta_1,\eta_2\in\gf_{2}^n\\
					\sum_{i=1}^{2}(\alpha_i+\beta_i)=0, \sum_{i=1}^{2}(\gamma_i+\eta_i)=0\\
					\end{array}$}} } \prod_{i=1}^{2}W_f(\gamma_i,\alpha_i)W_f(\eta_i,\alpha_i)W_f(\gamma_i,\beta_i)W_f(\eta_i,\beta_i)\\
		& \ge & 2^{4n+1}\sum_{{\mbox{\tiny$\begin{array}{c}
					\alpha, \gamma\in \gf_{2}^n\\
					\end{array}$}} }W_f(\gamma,\alpha)^4 + 2^{9n+1} -5\cdot  2^{8n} + 2^{7n+1}.\\	
	\end{eqnarray*}
   Moreover, $f$ is a $2$-uniform BCT function if and only if  the above inequality is an equality.
 \end{Th}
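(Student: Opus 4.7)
The plan is to prove this as a direct consequence of Lemma \ref{lemma-T(a,b)-walsh} together with the polynomial trick with $\phi(x)=x(x-2)$ already set up in the paragraph immediately preceding the theorem. First I would choose $\phi(x)=x^2-2x$, observing that $\phi(x)=0$ for $x\in\{0,2\}$ and $\phi(x)>0$ for every even $x\in\{4,6,\dots,2^n\}$. Since $T(a,b)$ is always a nonnegative even integer (by Remark \ref{remark}), this gives the scalar inequality
\begin{equation*}
\sum_{a,b\in\gf_2^n\setminus\{0\}}\bigl(T(a,b)\bigr)^2 \;-\; 2\sum_{a,b\in\gf_2^n\setminus\{0\}}T(a,b) \;\ge\; 0,
\end{equation*}
with equality if and only if $T(a,b)\in\{0,2\}$ for every nonzero $a,b$, i.e.\ if and only if $f$ is a $2$-uniform BCT function.

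The second step is pure substitution. Applying Lemma \ref{lemma-T(a,b)-walsh} with $j=1$, the constraints $\alpha_1+\beta_1=0$ and $\gamma_1+\eta_1=0$ collapse the product to $W_f(\gamma_1,\alpha_1)^4$, producing
\begin{equation*}
\sum_{a,b\in\gf_2^n\setminus\{0\}}T(a,b) \;=\; 2^{-2n}\sum_{\alpha,\gamma\in\gf_2^n}W_f(\gamma,\alpha)^4 \;-\; 2^{n}(2^{n+1}-1),
\end{equation*}
while the $j=2$ case of the lemma rewrites $\sum_{a,b\neq 0}(T(a,b))^2$ as $2^{-6n}$ times the eightfold Walsh sum minus $2^{2n}(2^{n+1}-1)$. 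Plugging both into the inequality above and clearing denominators by multiplying through by $2^{6n}$ yields exactly the stated bound, once the constant term
\begin{equation*}
2^{8n}(2^{n+1}-1) \;-\; 2^{7n+1}(2^{n+1}-1) \;=\; 2^{9n+1}-5\cdot 2^{8n}+2^{7n+1}
\end{equation*}
is simplified. The equality clause is inherited verbatim from the scalar inequality.

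There is no real obstacle of a conceptual nature; the argument is entirely a bookkeeping exercise in combining the polynomial bound with the Walsh formulas of Lemma \ref{lemma-T(a,b)-walsh}. The only step where care is needed is the constant-term arithmetic: one must track the two separate correction terms $-2^{nj}(2^{n+1}-1)$ (for $j=1,2$) that come from the difference between summing over all $(a,b)\in\gf_2^n\times\gf_2^n$ versus only over nonzero $(a,b)$, and verify that their weighted combination collapses to $2^{9n+1}-5\cdot 2^{8n}+2^{7n+1}$. Beyond this, the proof is a one-line substitution followed by rearrangement.
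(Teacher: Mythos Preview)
Your proposal is correct and follows essentially the same approach as the paper: the authors choose $\phi(x)=x(x-2)$, write down the inequality $\sum_{a,b\neq 0}(T(a,b))^2-2\sum_{a,b\neq 0}T(a,b)\ge 0$, and then substitute the $j=1$ and $j=2$ cases of Lemma~\ref{lemma-T(a,b)-walsh} exactly as you describe. Your constant-term check $2^{8n}(2^{n+1}-1)-2^{7n+1}(2^{n+1}-1)=2^{9n+1}-5\cdot 2^{8n}+2^{7n+1}$ is the only computation the paper leaves implicit, and it is correct.
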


\mkq{Suppose $f$ is a function from $\gf_{2}^n$ to itself. Since the differential uniformity of $f$ is $2$ if and only if $f$ is a $2$-uniform BCT function from \cite{BCT2018}, Theorem \ref{2-uniform} gives another characterization of APN functions by means of the Walsh transform. }

\section{The boomerang uniformity of  some permutation polynomials \newline with low differential uniformity}

\mkq{Before this section, we introduce some notations that will be used in the following. Let $\omega$ be an element of $\gf_{2^2}\backslash\gf_{2}$ and $\tr_{2^n}(\cdot)$ denote the absolute trace function from $\gf_{2^n}$ to $\gf_{2}$. For any $\gamma\in\gf_{2^n}^{*}$, we assume $\mathrm{Ord}(\gamma)$ is the order of $\gamma$, i.e., the minimum positive integer $k$ such that $\gamma^k=1$.}

\subsection{APN permutations}

From \cite{BCT2018}, we know that the differential uniformity of permutation $f$ over $\gf_{2^n}$ is $2$ if and only if $f$ is a $2$-uniform BCT permutation. Thus constructing $2$-uniform BCT permutations over $\gf_{2^n}$  is equivalent to constructing APN permutations over $\gf_{2^n}$. In the case $n$ is even, there is only a sporadic APN permutation over $\gf_{2^6}$ presented by Dillon et al. \cite{Dillon2010} up to now and we call it as Dillon's Permutation. Therefore, when $n$ is even, there is also one $2$-uniform BCT permutation (Dillon's Permutation) over $\gf_{2^n}$. As for the case $n$ is odd, there are many infinite classes of APN permutations as follows and thus also $2$-uniform BCT permutations. In fact,  \cite{BCC2006} proved that if $x^d$ over $\gf_{2^n}$ is an APN, then 
\begin{equation*}
\gcd(d,2^n-1)= \left\{
\begin{aligned}
1, & ~~\text{if}~~ n ~~\text{is odd};  \\
3, & ~~\text{if}~~ n ~~\text{is even}. \\ 
\end{aligned}
\right.
\end{equation*}
The above result shows that APN power functions over $\gf_{2^n}$ must be permutations when $n$ is odd while those can not be permutations when $n$ is even. In Table I, we list  all current APN \mkq{functions}, i.e., $2$-uniform BCT permutation monomials over $\gf_{2^n}$, where $n$ is odd.

   \begin{table}[!htbp]
	\label{2-uniform-mon}
	\caption{$2$-uniform BCT permutation monomials over $\gf_{2^n}$, $n$ odd}
	\centering
	\begin{tabular}{cccc}	
		\toprule
		Function & Expression & Conditions & Ref. \\
		\midrule
		Gold &	$x^{2^i+1}$ & $\gcd(n,i)=1$  &  \cite{Gold1968,Nyberg1994} \\
		Kasami &	$x^{2^{2i}-2^i+1}$ & $\gcd(n,i)=1$ & \cite{Kasami1971}  \\
		Welch & $x^{2^k+3}$ & $n=2k+1$ & \cite{Dobbertin1999} \\
		Niho-1 & $x^{2^k+2^{k/2}-1}$ & $n=2k+1$, $k$ even & \cite{Dobbertin1999-1} \\
		Niho-2 & $x^{2^k+2^{(3k+1)/2}-1}$ & $n=2k+1$, $k$ odd & \cite{Dobbertin1999-1} \\  		
		Inverse &	$x^{-1}$ & $n$ odd &  \cite{Gold1968,BD1993} \\
		Dobbertin &	$x^{2^{4k}+2^{3k}+2^{2k}+2^k-1}$ & $n=5k$ & \cite{Dobbertin2001}  \\
		\bottomrule
	\end{tabular}
\end{table}

 In addition, there are also many APN functions with dominant expressions over $\gf_{2^n}$, such as \cite{BCL2008,BCL2009,BBM2008,BBM2011,EP2009}. However, these results are not permutations to our knowledge.  Certainly, we may transform them to permutations by CCZ equivalence. However, the research seems to be quite difficult and is not suitable to expand in the present paper. 

\subsection{$4$-uniform DDT permutations}

As is well-known, there are five classes of primarily constructed  $4$-uniform DDT permutations over $\gf_{2^n}$, which are listed in Table II. In Table II, ``some conditions" for Bracken-Tan-Tan function  refer to that  $ n=3k $, $k$ is even, $3\nmid k$, $k/2$ is odd, $\gcd(3k,s)=2$, $3 \mid {k+s}$ and $\alpha$ is a primitive element of $\gf_{2^n}$. 

\begin{table}[!htbp]
	\label{4-uniform-mon}
	\caption{$4$-uniform DDT permutations over $\gf_{2^n}$}
	\centering
\begin{tabular}{cccc}	
	\toprule
	Function & Expression & Conditions & Ref. \\
	\midrule
 Gold &	$x^{2^i+1}$ & $n=2k, k$ odd, $\gcd(n,i)=2$  &  \cite{Gold1968} \\
 Kasami &	$x^{2^{2i}-2^i+1}$ & $n=2k, k$ odd, $\gcd(n,i)=2$ & \cite{Kasami1971}  \\
 Inverse &	$x^{-1}$ & $n$ even & \cite{BD1993,Nyberg1994} \\
Bracken-Leander &	$x^{2^{2k}+2^k+1}$ & $n=4k$, $k$ odd & \cite{BL2010,Dobbertin1998} \\
Bracken-Tan-Tan & $\alpha x^{2^s+1}+\alpha^{2^k}x^{2^{-k}+2^{k+s}}$ & some conditions &  \cite{BTT2012} \\
\bottomrule
\end{tabular}	
\end{table}

In the subsection, we mainly consider boomerang uniformities of these permutation monomials over $\gf_{2^n}$ listed in Table II. 
\mkq{In fact, the boomerang uniformities of Gold and Inverse functions have been determined in \cite[Proposition 8]{BC2018}  and  \cite[Proposition 6]{BC2018}, respectively.  	As for Kasami, Bracken-Leander and Bracken-Tan-Tan functions, we list boomerang uniformities of these functions in small finite fields.  }
	\begin{table}[!htbp]
		\label{Kasami}
		\caption{The boomerang uniformity of the  Kasami function over $\gf_{2^{2k}}$}
		\centering
		\begin{tabular}{ccc|ccc}	
			\toprule
			Conditions &  Functions & Uniformities & Conditions &  Functions & Uniformities \\
			\midrule
		   $k=3, i=2$ & $x^{13}$ & $4$ &  $k=5,i=6$ & $x^{4033}$ & $44$  \\ 
		   $k=3, i=4$ & $x^{241}$ & $4$ &  $k=7, i=2$ & $x^{13}$ & $24$ \\
		   $k=5, i=2$ & $x^{13}$ & $44$ &   $k=7, i=4$ & $x^{241}$ & $16$ \\
		   $k=5, i=4$ & $x^{241}$ & $44$ & $k=7,i=6$ & $x^{4033}$ & $16$ \\
			\bottomrule
		\end{tabular}	
	\end{table}

   	\begin{table}[!htbp]
   	\label{Bracken-Leander}
   	\caption{The boomerang uniformity of the  Bracken-Leander function  over $\gf_{2^{4k}}$}
   	\centering
   	\begin{tabular}{ccc}	
   		\toprule
   		Conditions &  Functions & Uniformities  \\
   		\midrule
   		$k=1$ & $x^{7}$ & $4$   \\ 
   		$k=3$ & $x^{73}$ & $14$  \\
   		\bottomrule
   	\end{tabular}	
   \end{table}

{	From Tables III and IV, we can see that boomerang uniformities of Kasami and  Bracken-Leander functions become very high as the value of $k$ increases and it is the reason why we do not give theoretical result about boomerang uniformities of those two classes of functions. As for Bracken-Tan-Tan function \cite{BTT2012}: $ f(x) = \alpha x^{2^s+1}+\alpha^{2^k}x^{2^{-k}+2^{k+s}} $ over $\gf_{2^n}$, where $ n=3k $, $k$ is even, $3\nmid k$, $k/2$ is odd, $\gcd(3k,s)=2$, $3 \mid {k+s}$ and $\alpha$ is a primitive element of $\gf_{2^n}$, when $k = 2$, we have $s\equiv4\pmod6$ and $f(x)=\left(\alpha+\alpha^4\right) x^{15}$, which is one case of Gold functions and whose boomerang uniformity is $4$. Furthermore, when $k$ is bigger, like $k = 10$,  we can not compute the boomerang uniformity of $f(x)$ by Personal Computer within an acceptable time. However, we can see that the boomerang uniformity of  Bracken-Tan-Tan function is smaller than $12$ for any $k$ satisfying conditions according to Proposition \ref{Quadratic}. }

\subsection{ $4$-uniform DDT permutations constructed from the inverse function}

Recently, there were some $4$-uniform DDT permutations constructed from the inverse function, like \cite{LWY2013,PT2017,QTTL2013,TCT2015,QTLG2016} and the references therein. In this section, we mainly consider the function \mkq{over $\gf_{2^n}$}

 \begin{equation}
 \label{inv-switch}
f(x)=
\left\{
\begin{aligned}
1, & ~~ \text{if}~~ x = 0, \\
0, & ~~ \text{if}~~  x = 1,  \\
\frac{1}{x}, & ~~ \text{otherwise.}~~
\end{aligned}
\right.
\end{equation}
In \cite{LWY2013}, the authors proved that the differential uniformity of $f(x)$ over $\gf_{2^n}$ defined by (\ref{inv-switch}) is at most equal to $6$ and it is equal to $4$ if and only if $n\equiv2\pmod4$. Furthermore, $f(x)$ is with the best known nonlinearity. In the following, we consider the boomerang uniformity  of $f(x)$, i.e., the maximum of numbers of solutions in $\gf_{2^n}\times \gf_{2^n}$ of the following equation system for any $a,b\in\gf_{2^n}^{*}$,
\begin{equation}
	\label{BCT-uniform}
\left\{
\begin{array}{lr}
	f(x+a) + f(y+a) = b,  \\
f(x) + f(y) = b.
\end{array}
\right.
\end{equation}

\begin{Lemma}
	\label{lem-sparse-point}
	Let $f(x)$ be defined by (\ref{inv-switch}) and $a\notin \{1,\omega,\omega^2\}$. Then 
	\begin{enumerate}[(1)]
		\item $(0,y)$ and $(y,0)$ are two solutions of (\ref{BCT-uniform}) if and only if
			  $b=\frac{1+a}{a}$ or $a^2b^2+a^2b+ab+1=0$. In the case, $y=\frac{1}{b+1}$.
		\item $(1,y)$ and $(y,1)$ are two solutions of (\ref{BCT-uniform}) if and only if
 $b=\frac{1}{1+a}$ or 	$a^2b^2+ab^2+ab+1=0$. In the case, $y = \frac{1}{b}$.
	  \item $(a,y)$ and $(y,a)$ are two solutions of (\ref{BCT-uniform}) if and only if $a^2b^2+a^2b+ab+1=0$. In the case, $y=\frac{ab+a+1}{b+1}$;
	  \item $(a+1,y)$ and $(y,a+1)$ are two solutions of (\ref{BCT-uniform}) if and only if  $a^2b^2+ab^2+ab+1=0$. In the case, $y=\frac{ab+1}{b}$.
	\end{enumerate}
\end{Lemma}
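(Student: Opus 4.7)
The plan is to substitute each of the four specified values of $x$ into the system (\ref{BCT-uniform}), invoke the piecewise definition (\ref{inv-switch}) of $f$, and reduce each claim to a rational identity that one then solves for $y$ together with the constraints on $(a, b)$. The hypothesis $a \notin \{1, \omega, \omega^2\}$ combined with $a \in \gf_{2^n}^{*}$ is exactly what guarantees both $a, a+1 \in \gf_{2^n} \setminus \{0, 1\}$, so that $f(a) = 1/a$ and $f(a+1) = 1/(a+1)$, and also $a^2 + a + 1 \neq 0$, which will recur throughout as the excluder of degenerate subcases.

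For each of (1)--(4) I would split into two regimes. The \emph{generic regime}, where $y$ and $y+a$ both avoid $\{0, 1\}$, is essentially mechanical: one of the two equations of (\ref{BCT-uniform}) determines $y$ from $b$ explicitly (yielding $y = 1/(b+1)$, $y = 1/b$, $y = (ab+a+1)/(b+1)$, and $y = (ab+1)/b$ in cases (1)--(4) respectively), and substituting this value into the other equation and clearing denominators in characteristic $2$ reduces, after the cancellations $ab + ab = 0$ and $a + a = 0$, exactly to the displayed biquadratic condition on $(a, b)$. The \emph{boundary regime}, where $y$ or $y+a$ coincides with one of $\{0, 1\}$, requires more care: most such subcases either force $y = x$ (hence $b = 0$, excluded) or collapse to $a^2 + a + 1 = 0$ (excluded by hypothesis), leaving only two genuinely new constraints---namely $y = a$ in case (1), which gives $b = (1+a)/a$, and $y = a+1$ in case (2), which gives $b = 1/(1+a)$; one checks that both are consistent with the same generic formulas for $y$, so the lemma's single formula $y = 1/(b+1)$ (resp.\ $y = 1/b$) still applies uniformly across both regimes. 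In cases (3) and (4) the boundary regime yields nothing fresh: the extra pairs either duplicate solutions already listed in (1) or (2) under the $(x, y) \leftrightarrow (y, x)$ symmetry of (\ref{BCT-uniform}) noted in Remark \ref{remark}, or once again force $a^2 + a + 1 = 0$.

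The main obstacle is purely bookkeeping across the sixteen subcases (four values of $x$, each paired with four locations of $y$ and $y+a$ relative to $\{0, 1\}$), together with the need to verify each subcase's consistency against \emph{both} equations of (\ref{BCT-uniform}) simultaneously. A helpful organizing symmetry is the translation $x \mapsto x + a$, which swaps the two equations of (\ref{BCT-uniform}) and thereby pairs case (1) with case (3) and case (2) with case (4); this explains why cases (1) and (3) share the polynomial $a^2b^2 + a^2b + ab + 1$ while (2) and (4) share $a^2b^2 + ab^2 + ab + 1$, and cuts the algebraic labour roughly in half.
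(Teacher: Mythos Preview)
Your proposal is correct and follows essentially the same approach as the paper: substitute each of $x\in\{0,1,a,a+1\}$ into (\ref{BCT-uniform}), use the piecewise definition of $f$ to evaluate $f(x)$ and $f(x+a)$, let one equation determine $y$, and feed that $y$ into the remaining equation to obtain the stated polynomial constraint on $(a,b)$, handling separately the finitely many boundary values of $y$.

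Two small comparative remarks. First, your explicit split into a \emph{generic} regime ($y,y+a\notin\{0,1\}$) versus a \emph{boundary} regime is a cleaner organizing device than the paper's more ad hoc subcase enumeration; the paper in its proof of (3) and (4) silently assumes $y\notin\{0,1\}$ when writing $f(y)=(b+1)/(ab+a+1)$, and your observation that the boundary solutions there (e.g.\ $(a,0)$ when $b=(1+a)/a$) are exactly the pairs already listed in (1) or (2) under the $(x,y)\leftrightarrow(y,x)$ symmetry makes that omission harmless and explicit. Second, your translation symmetry $x\mapsto x+a$ pairing (1)$\leftrightarrow$(3) and (2)$\leftrightarrow$(4) is not used in the paper, which simply repeats the full computation four times; your observation halves the algebra and neatly explains why (1),(3) share $a^2b^2+a^2b+ab+1$ while (2),(4) share $a^2b^2+ab^2+ab+1$.
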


\begin{proof}
	(1) If $(0,y)$  is a solution of (\ref{BCT-uniform}), then we have
\begin{subequations} 	\label{BCT-uniform-(1)}
	\renewcommand\theequation{\theparentequation.\arabic{equation}}     
	\begin{empheq}[left={\empheqlbrace\,}]{align}
		\frac{1}{a} &+ f(y+a) = b \label{BCT-uniform-(1):1A} \\ 
1 &+ f(y) = b.  \label{BCT-uniform-(1):1B} 
	\end{empheq}
\end{subequations}
	Thus $f(y)=b+1$.  
	
 If $b=1$, then $y=1$ and $\frac{1}{a}+\frac{1}{1+a}=1,$ leading to $a=\omega$ or $\omega^2$. Contradictions! Hence in the case $b\neq1$ and $y=\frac{1}{b+1}$.

 In the following, we let $b\neq1$. 
	 If $y=a$ holds at the same time,  $b=\frac{1+a}{a}$. On the other hand, when  $b=\frac{1+a}{a}$, it is easy to check that $(0,a)$ and $(a,0)$ do satisfy (\ref{BCT-uniform-(1)}).  If $y=a+1$ in the meantime, $a+1=\frac{1}{b+1}$. Moreover, $\frac{1}{a}=b$ holds according to (\ref{BCT-uniform-(1):1A}). Therefore, $a=b=1$, which is a contradiction.  If $y\neq a,a+1$, we have $\frac{1}{a}+\frac{1}{y+a}=b$ from  (\ref{BCT-uniform-(1):1A}). Plugging $y=\frac{1}{b+1}$ into $\frac{1}{a}+\frac{1}{y+a}=b$, we obtain $a^2b^2+a^2b+ab+1=0$. Furthermore, when $a^2b^2+a^2b+ab+1=0$ holds, we can also check that $(0,\frac{1}{b+1}), (\frac{1}{b+1},0)$ are two solutions of (\ref{BCT-uniform-(1)}).
	
	(2) If $(1,y)$ is a solution of (\ref{BCT-uniform}), we have 
\begin{subequations} 	\label{BCT-uniform-(2)}
	\renewcommand\theequation{\theparentequation.\arabic{equation}}     
	\begin{empheq}[left={\empheqlbrace\,}]{align}
	\frac{1}{a+1} &+ f(y+a) = b \label{BCT-uniform-(2):1A} \\ 
& f(y) = b.  \label{BCT-uniform-(2):1B} 
	\end{empheq}
\end{subequations}
  
   If $b=1$, from (\ref{BCT-uniform-(2)}), we have $y=0$ and $\frac{1}{1+a}+\frac{1}{a}=1,$ which is a contradiction. Thus $b\neq1$ and $y=\frac{1}{b}$ from  (\ref{BCT-uniform-(2):1A}). 
  
   In the following, we let $b\neq1$.
   If $y=a$, we have $\frac{1}{a+1}=b+1$ from (\ref{BCT-uniform-(2):1A}). Together with $\frac{1}{b}=a$ and $\frac{1}{a+1}=b+1$, $a=1$, which is contradictory. 
   If $y=a+1$, we have $b=\frac{1}{a+1}$. Furthermore, when $b=\frac{1}{a+1}$, $(1,a+1), (a+1,1)$ are two solutions of (\ref{BCT-uniform-(2)}). 
	If $y\neq a,a+1$, we have $\frac{1}{1+a}+\frac{1}{y+a}=b$ from (\ref{BCT-uniform-(2):1A}). Plugging $y=\frac{1}{b}$ into  $\frac{1}{1+a}+\frac{1}{y+a}=b$, we get $a^2b^2+ab^2+ab+1=0$. Moreover, when $a^2b^2+ab^2+ab+1=0$ holds, $(0,\frac{1}{b})$, $(\frac{1}{b},0)$ are two solutions of (\ref{BCT-uniform-(2)}).

 (3) If $(a,y)$ is a solution of (\ref{BCT-uniform}), we have 
\begin{subequations} \label{BCT-uniform-(3)}
	\renewcommand\theequation{\theparentequation.\arabic{equation}}     
	\begin{empheq}[left={\empheqlbrace\,}]{align}
 	1 & + f(y+a) = b \label{BCT-uniform-(3):1A} \\ 
&	\frac{1}{a}  + f(y) = b.  \label{BCT-uniform-(3):1B} 
	\end{empheq}
\end{subequations}

If $b=1$, from (\ref{BCT-uniform-(3)}), we have $y=a+1$ and $\frac{1}{a}+\frac{1}{a+1}=1$, which is a contradiction. Thus $b\neq1$ and $y=\frac{1}{b+1}+a=\frac{ab+a+1}{b+1}.$ Together with (\ref{BCT-uniform-(3):1B}), we get $$b+\frac{1}{a}=f(y)=\frac{b+1}{ab+a+1},$$ i.e., 
$$a^2b^2+a^2b+ab+1=0.$$ Moreover, when $a^2b^2+a^2b+ab+1=0$, $(a,\frac{ab+a+1}{b+1})$ and $(\frac{ab+a+1}{b+1},a)$ are two solutions of (\ref{BCT-uniform-(3)}). 

 (3) If $(a+1, y)$ is a solution of (\ref{BCT-uniform}), we have 
\begin{subequations} 	\label{BCT-uniform-(4)}
	\renewcommand\theequation{\theparentequation.\arabic{equation}}     
	\begin{empheq}[left={\empheqlbrace\,}]{align}
	   f(y+a) & = b \label{BCT-uniform-(4):1A} \\ 
\frac{1}{a+1}  + f(y)	& = b.  \label{BCT-uniform-(4):1B} 
	\end{empheq}
\end{subequations}

If $b=1$, from (\ref{BCT-uniform-(4)}), we have $y=a$ and $\frac{1}{a}+\frac{1}{a+1}=1$, which is a contradiction. Thus $b\neq1$ and $y=\frac{1}{b}+a=\frac{ab+1}{b}.$ Together with  (\ref{BCT-uniform-(4):1B}), we get $$b+\frac{1}{a+1}=f(y)=\frac{ab+1}{b},$$ i.e., 
$$a^2b^2+ab^2+ab+1=0.$$ Moreover, when $a^2b^2+ab^2+ab+1=0$, $(a+1,\frac{ab+1}{b})$ and $(\frac{ab+1}{b},a+1)$ are two solutions of (\ref{BCT-uniform-(4)}). 
\end{proof}

Let sets $S_i ( i = 1, \cdots, 4)$ be the conditions of Lemma \ref{lem-sparse-point}, respectively, i.e.,
$$S_1 :=\{ (a,b) \in\gf_{2^n}^{*}\times\gf_{2^n}^{*} | b=\frac{1+a}{a}~~ \text{or}~~ a^2b^2+a^2b+ab+1=0 \}; $$
$$S_2 := \{ (a,b) \in\gf_{2^n}^{*}\times\gf_{2^n}^{*} | b=\frac{1}{1+a} ~~ \text{or}~~ a^2b^2+ab^2+ab+1=0\}; $$
$$S_3 := \{ (a,b) \in\gf_{2^n}^{*}\times\gf_{2^n}^{*} |  a^2b^2+a^2b+ab+1=0 \};$$
$$S_4 := \{ (a,b) \in\gf_{2^n}^{*}\times\gf_{2^n}^{*} | a^2b^2+ab^2+ab+1=0  \}. $$
 What should be noticed is that $a\notin \{1,\omega,\omega^2\}$ for $S_i ( i = 1, \cdots, 4)$ and there are at least two solutions of  (\ref{BCT-uniform}) if $(a,b)\in S_i$, where $i=1,2,3,4$. Moreover, if $(a,b)$ belongs to some $S_i$ at the same time, there are more solutions of (\ref{BCT-uniform}). Therefore, it is worthwhile to consider the intersections of $S_i$ and the following lemma can answer the problem, whose proof is omitted since it is easy to prove.
 
 \begin{Lemma}
 	\label{lem-S1S2S3S4}
 	\begin{enumerate}
 		\item $S_1\cap S_2 = \{ (a,b) \in\gf_{2^n}^{*}\times\gf_{2^n}^{*} | a^3+a+1=0 ~~\text{and}~~ b=a,\frac{1+a}{a},\frac{1}{a+1}  \}$;
 		\item $S_1\cap S_3 = S_3$;
 		\item $S_1\cap S_4 = \{ (a,b) \in\gf_{2^n}^{*}\times\gf_{2^n}^{*} | a^3+a+1=0 ~~\text{and}~~ b=a,\frac{1+a}{a}  \}$;
 		\item $S_2\cap S_3 = \{ (a,b) \in\gf_{2^n}^{*}\times\gf_{2^n}^{*} | a^3+a+1=0  ~~\text{and}~~ b=a,\frac{1}{a+1} \}$;
 		\item $S_2\cap S_4 = S_4$;
 		\item $S_3\cap S_4 = \{ (a,b)\in\gf_{2^n}^{*}\times\gf_{2^n}^{*} | a^3+a+1=0 ~~\text{and}~~ b=a \}.$
 		\item $S_1\cap S_2 \cap S_3 \cap S_4 = \{ (a,b) | a^3+a+1=0 ~~\text{and}~~ b=a \}. $
 	\end{enumerate}
 \end{Lemma}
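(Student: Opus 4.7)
The plan is to prove the seven set equalities by straightforward algebraic manipulation in characteristic two, exploiting the fact that the defining conditions of $S_3$ and $S_4$ are literally two of the branches used to define $S_1$ and $S_2$. First I would dispose of items (2) and (5) by pure inspection: every pair satisfying $a^2b^2+a^2b+ab+1=0$ (respectively $a^2b^2+ab^2+ab+1=0$) automatically satisfies the disjunction defining $S_1$ (respectively $S_2$), so $S_3\subseteq S_1$ and $S_4\subseteq S_2$, yielding $S_1\cap S_3=S_3$ and $S_2\cap S_4=S_4$.

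Next I would handle item (6), which is the structural core. Adding the two equations defining $S_3$ and $S_4$ gives $a^2b+ab^2=ab(a+b)=0$, hence $b=a$ since $a,b\in\gf_{2^n}^{*}$. Substituting $b=a$ into either equation yields $a^4+a^3+a^2+1=0$, which factors as $(a+1)(a^3+a+1)=0$; since the hypothesis of Lemma~\ref{lem-sparse-point} excludes $a=1$, this forces $a^3+a+1=0$. For items (3) and (4) I would split into two cases according to the disjunction in $S_1$ (or $S_2$): in the ``linear'' branch, substitute $b=\frac{1+a}{a}$ into the $S_4$ equation, clear denominators, and simplify using $(1+a)^3=1+a+a^2+a^3$ in characteristic two to obtain exactly $a^3+a+1=0$; the mirror substitution $b=\frac{1}{1+a}$ in the $S_3$ equation also collapses to $a^3+a+1=0$. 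The other branch is just $S_3\cap S_4$, handled by (6).

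Items (1) and (7) then follow by combining the previous cases. For (1), $S_1\cap S_2$ breaks into four sub-cases: the cross case $b=\frac{1+a}{a}$ together with the $S_4$-branch (yielding $a^3+a+1=0$ and $b=\frac{1+a}{a}$); its symmetric counterpart $b=\frac{1}{1+a}$ together with the $S_3$-branch (yielding $b=\frac{1}{a+1}$); the doubly-quadratic case $S_3\cap S_4$ (yielding $b=a$); and the doubly-linear case $b=\frac{1+a}{a}=\frac{1}{1+a}$, which forces $(1+a)^2=a$, i.e. $a^2+a+1=0$, placing $a\in\{\omega,\omega^2\}$ and hence excluded by hypothesis. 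Finally (7) is immediate: $S_1\cap S_2\cap S_3\cap S_4\subseteq S_3\cap S_4$ is contained in the triple described by (6), and conversely anything in $S_3\cap S_4$ already lies in $S_1\cap S_2$ by the inclusions of (2) and (5).

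The only real obstacle is the substitution step in (3) and (4): verifying that plugging $b=\frac{1+a}{a}$ into $a^2b^2+ab^2+ab+1$ (and the symmetric substitution for (4)) simplifies cleanly to $a^3+a+1$. This is routine but requires care in clearing the $a$ in the denominator and using $(1+a)^2=1+a^2$ in characteristic two; no deeper idea is needed, and the remaining verifications that the claimed points actually lie in the intersections are immediate substitution checks.
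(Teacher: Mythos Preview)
Your proposal is correct and complete. The paper itself omits the proof of this lemma entirely (stating only that it ``is omitted since it is easy to prove''), so there is no original argument to compare against; your decomposition---observing $S_3\subseteq S_1$, $S_4\subseteq S_2$, reducing $S_3\cap S_4$ via the sum $ab(a+b)=0$ and the factorization $a^4+a^3+a^2+1=(a+1)(a^3+a+1)$, and then assembling the remaining items by case-splitting on the two disjunctive branches of $S_1,S_2$---is exactly the natural way to fill in the details.
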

  
\begin{Rem}
	\label{rem-8-4}
\emph{	It is clear that $a^3+a+1=0$ has solutions in $\gf_{2^n}$ if and only if $n\equiv0\pmod3$. Thus when $n\equiv 0\pmod3$, there exist some $a,b$ such that $a^3+a+1=0$ and $b=a$. Together with   Lemmas \ref{lem-sparse-point} and \ref{lem-S1S2S3S4}, for such $a, b$ satisfying $a^3+a+1=0$ and $b=a$, (\ref{BCT-uniform}) has at least eight solutions in $\gf_{2^n}\times\gf_{2^n}$. And when $n\not\equiv0\pmod3$, there exist some $a,b\in\gf_{2^n}$ satisfying $a^2b^2+a^2b+ab+1=0$ or $a^2b^2+ab^2+ab+1=0$, when which (\ref{BCT-uniform}) has at least four solutions in $\gf_{2^n}\times\gf_{2^n}$.}
\end{Rem}  
  
  In the following, we give the main theorem of this section, determining the boomerang uniformity of $f(x)$ defined by (\ref{inv-switch}). However, the proof is extremely tedious and we only introduce the prime idea here. The detailed and complete proof can be found in the Appendix.
  
\begin{Th}
	\label{BCT-inv-switch}
	Let $f(x)$ be defined by (\ref{inv-switch}) and $n\ge3$. Then the boomerang uniformity of $f$ is 
	\begin{equation*}
\delta_{f} =	\left\{
	\begin{aligned}
	10,  ~~\text{if}~~ n\equiv0\pmod6, \\
	8, ~~\text{if}~~ n\equiv 3 \pmod 6, \\
	6,   ~~\text{if}~~ n\not\equiv0\pmod3.
	\end{aligned}
	\right.
	\end{equation*}	 
\end{Th}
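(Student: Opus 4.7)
The plan is to partition the solutions $(x,y)$ of (\ref{BCT-uniform}) into two classes: \emph{sparse} solutions, in which at least one of $x, y$ lies in $\{0, 1, a, a+1\}$ (equivalently, at least one of $x, y, x+a, y+a$ lies in $\{0,1\}$), and \emph{generic} solutions, in which $f$ acts as the ordinary inverse $z \mapsto 1/z$ on all four arguments. Lemma \ref{lem-sparse-point} already enumerates the sparse solutions when $a \notin \{1, \omega, \omega^2\}$: each of the conditions $S_1, S_2, S_3, S_4$ that is satisfied contributes an ordered pair $(x_0, y_0), (y_0, x_0)$, giving a maximum sparse contribution of eight. The special values $a \in \{1, \omega, \omega^2\}$ I would treat by a short direct case analysis, which shows that they never produce more solutions than the generic bound.

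For the generic count, substituting $f(z) = 1/z$ into (\ref{BCT-uniform}) yields $\tfrac{1}{x+a} + \tfrac{1}{y+a} = b$ and $\tfrac{1}{x} + \tfrac{1}{y} = b$. Clearing denominators and subtracting, one obtains $x + y = a$ and then $xy = a/b$. Thus $x, y$ are the two roots of $t^2 + at + a/b$, and the substitution $t = as$ reduces solvability to $\tr_{2^n}(1/(ab)) = 0$, in which case exactly two ordered generic pairs exist. A direct check confirms that such roots never lie in $\{0, 1, a, a+1\}$, so sparse and generic pairs are disjoint and may be added without double-counting.

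To maximise sparse plus generic, I would invoke Lemma \ref{lem-S1S2S3S4}: since $S_3 \subset S_1$ and $S_4 \subset S_2$, while every cross-intersection among $S_1, S_2, S_3, S_4$ forces $a^3 + a + 1 = 0$ and $b = a$, the sparse count reaches its maximum $8$ exactly when $a \in \gf_{2^3}^{*}$ satisfies $a^3 = a + 1$ and $b = a$; otherwise it is bounded by $4$. For this extremal choice one has $1/(ab) = 1/a^2$, and a short computation in $\gf_{2^3}$ shows $\tr_{\gf_{2^3}/\gf_2}(1/a^2) = 1$, whence $\tr_{2^n}(1/(ab)) = (n/3) \bmod 2$. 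Therefore the generic contribution is $2$ when $n \equiv 0 \pmod 6$ and $0$ when $n \equiv 3 \pmod 6$, giving totals of $10$ and $8$ respectively. When $3 \nmid n$, the polynomial $a^3+a+1$ has no root in $\gf_{2^n}$, so the sparse count is at most $4$, achievable for a suitable $(a,b) \in S_3$ (or $S_4$); adding the generic $2$ yields $6$. Feasibility of this optimum when $3 \nmid n$ is verified by parametrising $c = ab$ as a root of $c^2 + (a+1)c + 1 = 0$ and checking that $a$ can be chosen with both $\tr_{2^n}(1/(a+1)) = 0$ (to ensure $S_3$ is nonempty) and $\tr_{2^n}(1/c) = 0$ (to ensure the generic contribution).

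The main obstacle is the bookkeeping around the extremal configuration: verifying that the eight sparse pairs produced by Lemma \ref{lem-sparse-point} at $a^3 + a + 1 = 0$, $b = a$ are genuinely distinct (they in fact exhaust $\gf_{2^3}$), confirming that sparse and generic solutions never collide, treating the excluded values $a \in \{1, \omega, \omega^2\}$ by a separate direct enumeration, and, in the $3 \nmid n$ case, simultaneously achieving the sparse bound $4$ and the generic bound $2$ via the trace analysis outlined above.
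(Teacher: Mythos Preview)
Your proposal is correct and follows essentially the same route as the paper: both split the solutions of (\ref{BCT-uniform}) into ``sparse'' ones (handled by Lemma~\ref{lem-sparse-point} and organised via Lemma~\ref{lem-S1S2S3S4}) and ``generic'' ones (reduced to the quadratic $t^2+at+a/b=0$ and the trace condition $\tr_{2^n}(1/(ab))=0$), treat the excluded values $a\in\{1,\omega,\omega^2\}$ separately, and then check whether the maximal sparse configuration $a^3+a+1=0$, $b=a$ is compatible with the trace condition. Your trace computation via the subfield tower $\gf_{2^3}\subset\gf_{2^n}$ is a slightly cleaner variant of the paper's direct calculation, and your feasibility sketch for the case $3\nmid n$ (parametrising $c=ab$ by $c^2+(a+1)c+1=0$) differs from the paper's choice (parametrising $1/(ab)=z+z^2$ first), but both reach the same conclusion.

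Two small points worth tightening: your phrase ``every cross-intersection forces $b=a$'' overstates Lemma~\ref{lem-S1S2S3S4} (only the full quadruple intersection pins down $b=a$; pairwise cross-intersections merely force $a^3+a+1=0$), though this does not affect your bound since you only need the cross-intersections to be empty when $3\nmid n$. And your blanket claim that the generic roots ``never lie in $\{0,1,a,a+1\}$'' is not literally true (they land in $\{1,a+1\}$ precisely when $b=a/(a+1)$); what actually holds, and suffices, is that this collision cannot occur at the optimal sparse configurations you use --- one checks directly that $b=a/(a+1)$ is incompatible with both $S_3$ and $S_4$, and that at $a^3+a+1=0$, $b=a$ the quadratic $t^2+at+1$ has no root in $\{1,a+1\}$.
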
  

\begin{proof} 
		It suffices to compute  $\max \limits_{a,b\in\gf_{2^n}^{*}} T(a,b)$, where $T(a,b)$ is the number of solutions in $\gf_{2^n}\times \gf_{2^n}$ of the following equation system for $a,b\in\gf_{2^n}\backslash\{0\}$,
		\begin{equation*}
		\left\{
		\begin{array}{lr}
		f(x+a) + f(y+a) = b,  \\
		f(x) + f(y) = b.
		\end{array}
		\right.
		\end{equation*}

We divide the problem into three cases:
\begin{enumerate}[(1)]
	\item  $a=1$, $b\in\gf_{2^n}^{*}$;
	\item $a=\omega$ or $\omega^2$, $b\in\gf_{2^n}^{*}$;
	\item $a\in\gf_{2^n}^{*}\backslash\{1,\omega,\omega^2 \}$,  $b\in\gf_{2^n}^{*}$. 
\end{enumerate}
For the details proof of each case, please see the Appendix. We only mention that Lemmas \ref{lem-sparse-point} and \ref{lem-S1S2S3S4} play important roles in cases (3) and for each case, the maximum of numbers of solutions in $\gf_{2^n}\times \gf_{2^n}$ of the above equation system is listed here:
   	\begin{table}[!htbp]
	\centering
	\begin{tabular}{cc}	
		\toprule
		Cases &  $\max \limits_{a,b\in\gf_{2^n}^{*}} T(a,b)$   \\
		\midrule
		$a = 1$ & $2$ if $n\equiv1\pmod2$, $4$ if $n\equiv2\pmod4$, $6$ if $n\equiv0\pmod4$ \\ 
		$a = \omega$ or $\omega^2$ &   $4$ if $n\equiv2\pmod4$, $6$ if $n\equiv0\pmod4$  \\
		$a \in \gf_{2^n}^{*}\backslash\{1,\omega,\omega^2 \} $ &  $10$ if $n\equiv0\pmod6$, $8$ if $n\equiv3\pmod6$, $6$ if $n\not\equiv0\pmod3$\\
		\bottomrule
	\end{tabular}	
\end{table}

Therefore, the boomerang uniformity of $f$ is as claimed.
\end{proof}

\begin{example}
\emph{	In order to test the correctness of Theorem \ref{BCT-inv-switch}, we compute the boomerang uniformity $\delta_{f}$ of $f$ defined by (\ref{inv-switch}) when $3\le n\le 9$ by Magma and the results are listed as follows.
	 \begin{table}[!htbp]
	 	\centering
	 	\begin{tabular}{cccccccc}	
	 		\toprule
	 		$n$ & $3$ & $4$ & $5$ & $6$ & $7$ & $8$ & $9$ \\
	 		\midrule
	 		$\delta_{f}$ & $8$  & $6$ & $6$  &  $10$ & $6$ & $6$ & $8$ \\
	 		\bottomrule
	 	\end{tabular}	
	 \end{table}}
\end{example}
  
From the proof of Theorem \ref{BCT-inv-switch}, we see that computing  boomerang uniformity is generally more complicated than computing differential uniformity and  it seems quite complicated to obtain the boomerang uniformity of all $4$-uniform DDT permutations constructed from inverse function.
  
\section{A class of $4$-uniform BCT permutations}

 In \cite{BCT2018}, Cid et al.  discussed that obtaining $4$-uniform BCT S-boxes appears to be hard, especially as the size of the S-box increases. In the following, we present a class of $4$-uniform BCT permutation polynomials over $\gf_{2^n}$.

In \cite{Zieve2013}, Zieve obtained some classes of permutation polynomials with the form of $x^rh\left(x^{q+1}\right)\in\gf_{q^2}[x]$, where $q$ is an arbitrary prime power by using all low-degree ($\le 5$) permutation polynomials over $\gf_q$ (cf. \cite[P352, Table 7.1]{LN1997}). The following theorem is one of these permutation polynomials and we can prove that it is a $4$-uniform BCT function (Theorem \ref{xq+2+gamx}) when $q$ is even. By the way, the permutation polynomial in Theorem \ref{xq+2+gamx} is also a $4$-uniform DDT function  which has been showed in \cite{ZZC2015}.

\begin{Lemma}
	\label{Zieveq+13}
	\cite{Zieve2013}
	Pick $\gamma\in\gf_{q^2}^{*}$, and write $f(x)=x^{q+2}+\gamma x.$ Then $f$ permutes $\gf_{q^2}$ if and only if one of the following occurs:
	\begin{enumerate}[(1)]
		\item $q\equiv5\pmod6$ and $\gamma^{q-1}$ has order $6$;
		\item $q\equiv2\pmod6$ and $\gamma^{q-1}$ has order $3$; or
		\item $q\equiv0\pmod3$ and $\gamma^{q-1}=-1.$
	\end{enumerate}
\end{Lemma}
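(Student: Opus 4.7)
The natural approach is to apply Zieve's well-known criterion for polynomials of the form $x^r h(x^s)$. Writing $f(x) = x \cdot (x^{q+1} + \gamma)$, we have $r = 1$, $s = q+1$, and $h(y) = y + \gamma$, so the hypothesis $\gcd(r, s) = 1$ is trivial and the criterion reduces the permutation of $\gf_{q^2}$ by $f$ to the permutation of $\mu_{(q^2-1)/(q+1)} = \mu_{q-1} = \gf_q^*$ by the associated map $g(x) := x(x+\gamma)^{q+1}$.

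The first observation is that $\gamma \notin \gf_q$ in all three cases, since $\gamma \in \gf_q$ would force $\gamma^{q-1} = 1$, contradicting that $\gamma^{q-1}$ has order $3$, $6$, or equals $-1$. With $\gamma \notin \gf_q$, the map $g$ is well defined on $\gf_q$ (as $x + \gamma \neq 0$ there), and for $x \in \gf_q$ one has $(x+\gamma)^{q+1} = (x+\gamma)(x+\gamma^q) \in \gf_q$, yielding the cubic
\[ g(x) = x(x+\gamma)(x+\gamma^q) = x^3 + \mathrm{Tr}(\gamma)\,x^2 + \mathrm{N}(\gamma)\,x \in \gf_q[x], \]
where $\mathrm{Tr}$ and $\mathrm{N}$ denote the relative trace and norm from $\gf_{q^2}$ to $\gf_q$. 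The problem thus reduces to: for which $\gamma$ does this cubic permute $\gf_q$?

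I would then treat the three cases separately, setting $\omega := \gamma^{q-1} \in \mu_{q+1}$ and using the identities $\mathrm{Tr}(\gamma) = \gamma(1+\omega)$ and $\mathrm{N}(\gamma) = \gamma^2 \omega$. In case (3) (characteristic $3$, $\omega = -1$), one finds $\mathrm{Tr}(\gamma) = 0$ and $g(x) = x^3 - \gamma^2 x$, a $\gf_3$-linear map whose kernel $\{0, \pm\gamma\}$ meets $\gf_q$ only at $0$, so $g$ is automatically a permutation. In case (1) ($q \equiv 5 \pmod 6$, $\omega$ a primitive $6$th root of unity), the cyclotomic relation $(1+\omega)^2 = 3\omega$ gives $\mathrm{Tr}(\gamma)^2 = 3\,\mathrm{N}(\gamma)$; depressing the cubic via $x \mapsto y - \mathrm{Tr}(\gamma)/3$ then kills both the quadratic and the linear terms simultaneously, leaving $g$ affinely equivalent to $y \mapsto y^3$, which is a permutation because $q \equiv 2 \pmod 3$ forces $\gcd(3, q-1) = 1$. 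In case (2) ($q$ even, $\omega$ a primitive cube root of unity), the characteristic-$2$ identity $(1+\omega)^2 = \omega$ gives $\mathrm{Tr}(\gamma)^2 = \mathrm{N}(\gamma)$, so setting $x = \mathrm{Tr}(\gamma)\, y$ rescales $g$ to $y(y^2+y+1)$, whose injectivity on $\gf_q$ is equivalent, via a short Artin--Schreier computation on the residue $y_1^2 + y_1 y_2 + y_2^2 + y_1 + y_2 + 1 = 0$, to $\mathrm{Tr}_{\gf_q/\gf_2}(1) = 1$; this holds precisely when $q = 2^n$ with $n$ odd, matching $q \equiv 2 \pmod 6$.

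The main obstacle is verifying the ``only if'' direction: one must rule out all other possible orders of $\gamma^{q-1} \in \mu_{q+1}$. For instance, when $\gamma^{q-1}$ has order $1$ or $2$ in case (1), the cubic $g$ acquires a repeated root inside $\gf_q$ and hence fails to permute, while for $\omega$ of order $3$ in odd characteristic the depressed cubic retains a nontrivial linear term whose discriminant obstructs injectivity. Systematically disposing of each such degenerate sub-case, via short computations of discriminants (in odd characteristic) or of absolute traces (in characteristic $2$), is what makes the classification delicate, but in each remaining alternative $\gamma^{q-1}$-order the obstruction is elementary once the cubic has been written down explicitly.
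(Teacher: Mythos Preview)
The paper does not prove this lemma at all: it is quoted as a known result from Zieve's preprint \cite{Zieve2013} and is used only as a black box (to guarantee that $f(x)=x^{q+2}+\gamma x$ is a permutation before its boomerang uniformity is computed in Theorem~\ref{xq+2+gamx}). There is therefore no ``paper's own proof'' to compare your attempt against.

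That said, your sketch is essentially the argument Zieve himself gives. The reduction via the $x^{r}h(x^{s})$ criterion to the cubic $g(x)=x(x+\gamma)(x+\gamma^{q})=x^{3}+\mathrm{Tr}(\gamma)x^{2}+\mathrm{N}(\gamma)x$ on $\gf_q$ is exactly the mechanism of \cite{Zieve2013}, and your case analysis (the $\gf_3$-linear map in characteristic~$3$; the depressed cubic collapsing to $y^{3}$ when $(1+\omega)^{2}=3\omega$; the Artin--Schreier/trace condition $\tr_{\gf_q/\gf_2}(1)=1$ in characteristic~$2$) is correct and matches the source. The only part left genuinely informal is the ``only if'' direction, where you must exclude every other possible order of $\gamma^{q-1}\in\mu_{q+1}$; you flag this honestly, and the individual exclusions are indeed routine once the cubic is in hand, but a complete write-up would need to enumerate them.

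For the purposes of the present paper, none of this is required: a citation suffices.
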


 Leonard and  Williams characterized the factorization of a quartic polynomial over $\gf_{2^n}$ in \cite{LW1972} and the result is useful to compute the boomerang uniformity of $f(x)$ in Theorem \ref{xq+2+gamx}.

\begin{Lemma}
	\cite{LW1972}
	\label{lem-quartics}
	Let $q=2^n$ and $f(x)=x^4+a_2x^2+a_1x+a_0\in\gf_q[x]$, where $a_0a_1\neq0$. Let $g(x)=x^3+a_2x+a_1$ and $r_1,r_2,r_3$ be three roots in $\gf_q$ if they exist in $\gf_q$. Then $f(x)$ has four solutions in $\gf_q$ if and only if $g(y)$ has three solutions in $\gf_{q}$ and $\tr_{q}\left(\frac{a_0r_1^2}{a_1^2}\right)=\tr_{q}\left(\frac{a_0r_3^2}{a_1^2}\right)=\tr_{q}\left(\frac{a_0r_3^2}{a_1^2}\right)=0.$
\end{Lemma}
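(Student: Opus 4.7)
The plan is to reduce splitting of the quartic $f$ over $\gf_q$ to splitting of its resolvent cubic $g$ plus certain trace conditions, via the classical factorization trick for quartics in characteristic two. Since $f$ has no $x^3$-term, any $\gf_q$-factorization of $f$ into two monic quadratics must have matching linear coefficients, so I look for
\[
f(x) = (x^2 + \alpha x + \beta)(x^2 + \alpha x + \gamma).
\]
Expanding (the two $\alpha x^3$ terms cancel in characteristic two) and matching coefficients gives the system
\[
\alpha^2 + \beta + \gamma = a_2, \quad \alpha(\beta+\gamma) = a_1, \quad \beta\gamma = a_0.
\]
Eliminating $\beta+\gamma$ between the first two equations forces $\alpha^3 + a_2 \alpha + a_1 = 0$, so each $\gf_q$-quadratic factorization of $f$ corresponds to a root $\alpha = r$ of $g(y)$ in $\gf_q$; the associated $\beta, \gamma$ are then the two roots of $t^2 + (a_1/r) t + a_0$, which lie in $\gf_q$ iff $\tr_q(a_0 r^2/a_1^2) = 0$.

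Next I would interpret everything in terms of the four roots $x_1, x_2, x_3, x_4$ of $f$ in its splitting field. Because $x_1 + x_2 + x_3 + x_4 = 0$ in characteristic two, the three roots of $g$ admit the symmetric description $r_1 = x_1+x_2 = x_3+x_4$, $r_2 = x_1+x_3 = x_2+x_4$, $r_3 = x_1+x_4 = x_2+x_3$, with $r_1 r_2 r_3 = a_1$ and $x_1 x_2 x_3 x_4 = a_0$. At the root $r_i$, the two constant terms $\beta_i, \gamma_i$ in the corresponding quadratic factorization are exactly the two pair-products induced by $r_i$ (for example $\beta_1 = x_1 x_2$ and $\gamma_1 = x_3 x_4$), and a short direct computation gives $a_0 r_i^2/a_1^2 = \beta_i \gamma_i /(\beta_i + \gamma_i)^2$. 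Hence the trace condition $\tr_q(a_0 r_i^2/a_1^2) = 0$ is exactly the condition that the pair-products $\beta_i, \gamma_i$ both lie in $\gf_q$.

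With this dictionary, both directions follow from a short Frobenius argument using $\sigma: x \mapsto x^q$. For $(\Rightarrow)$, if all $x_i \in \gf_q$, then every pair-sum and every pair-product is in $\gf_q$, which makes $g$ split in $\gf_q$ and forces all three trace conditions. For $(\Leftarrow)$, assume $g$ splits in $\gf_q$ and all three trace conditions hold. Then for each pair $\{x_i, x_j\}$ corresponding to some $r_k$, both its sum and its product lie in $\gf_q$, so $\sigma$ stabilises $\{x_i, x_j\}$ setwise. Intersecting the three pairs through $x_1$ yields $\sigma(x_1) \in \{x_1,x_2\} \cap \{x_1,x_3\} \cap \{x_1,x_4\} = \{x_1\}$ when the four roots are distinct, and then stabilisation of the other pairs forces $\sigma$ to fix each $x_j$, so all roots lie in $\gf_q$.

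The main obstacle I anticipate is the last step — ruling out the non-trivial Frobenius actions cleanly. The intersection argument requires the four roots of $f$ to be distinct, so I would need to use the hypothesis $a_0 a_1 \neq 0$ together with a discriminant computation to preclude the degenerate cases where two roots coincide (or handle them by a short separate verification). The algebra linking $a_0 r_i^2/a_1^2$ to $\beta_i \gamma_i/(\beta_i+\gamma_i)^2$ should fall out of Vieta and the identity $r_2 r_3 = x_1 x_2 + x_3 x_4$, but bookkeeping it symmetrically across the three pairings is where the argument will need the most care.
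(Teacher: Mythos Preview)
The paper does not prove this lemma at all: it is quoted verbatim as a known result of Leonard and Williams \cite{LW1972} and used as a black box in the proof of Theorem~\ref{xq+2+gamx}. So there is no ``paper's own proof'' to compare your proposal against.

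That said, your sketch is essentially the classical Leonard--Williams argument and is correct. One remark on the obstacle you flag at the end: the separate discriminant discussion you anticipate is unnecessary. In characteristic two, $f'(x)=a_1$, and since $a_1\neq 0$ by hypothesis, $f$ is automatically separable and its four roots are distinct. This removes the only gap you identified, and the Frobenius intersection argument then goes through cleanly. The identity $r_jr_k=\beta_i+\gamma_i$ (equivalently $a_0r_i^2/a_1^2=\beta_i\gamma_i/(\beta_i+\gamma_i)^2$) that you need is the one-line computation $(x_1+x_3)(x_1+x_4)=x_1^2+x_1(x_3+x_4)+x_3x_4=x_1x_2+x_3x_4$ using $x_1+x_2+x_3+x_4=0$, so the bookkeeping you worry about is minimal.
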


\begin{Th}
	\label{xq+2+gamx}
	Let $q=2^n$, $n$ be odd and $f(x)=x^{q+2}+\gamma x\in\gf_{q^2}[x]$, where $\mathrm{Ord}\left(\gamma^{q-1}\right)=3$. Then $\delta_f=4$.
\end{Th}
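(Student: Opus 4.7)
The plan is to apply Theorem \ref{boomerang-def}: for $a,b\in\gf_{q^2}^*$ I count pairs $(x,y)$ with $f(x+a)+f(y+a)=b$ and $f(x)+f(y)=b$. Summing the two equations and expanding with $f(x)=x^{q+2}+\gamma x$ yields $(x+y)^q a^2 + a^q(x+y)^2 = 0$. Setting $z = x+y$ (nonzero, since $b\neq 0$) and dividing by $a^2z^2$, this reads $(z/a)^{q-2}=1$. With $n$ odd, a short gcd computation gives $\gcd(q-2,q^2-1) = 2^{\gcd(n-1,2n)}-1 = 3$, so $z/a\in\gf_4^*$ and hence $y = x+\lambda a$ for some $\lambda\in\{1,\omega,\omega^2\}$.

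Next, I will substitute $y=x+\lambda a$ into $f(x)+f(y)=b$ to get, for each $\lambda$, an $\gf_2$-linear equation $V_{\lambda a}(x) = b + (\lambda a)^{q+2}+\gamma\lambda a$, where $V_z(x):= z^q x^2 + z^2 x^q$. The substitution $x=zu$ shows $\im V_z = z^{q+2}\,T$ with $T := \{u^q+u^2:u\in\gf_{q^2}\}$, and clearly $\ker V_z = \gf_4\cdot z$ has size $4$. Writing $u^q+u^2 = (u^{2^{n-1}}+u)^2$ and using $2^{n-1}=4^{(n-1)/2}$ together with $\tr_{\gf_{q^2}/\gf_4}(w^4)=\tr_{\gf_{q^2}/\gf_4}(w)$, I plan to show that the image of the $\gf_4$-linear map $L(u) = u^{2^{n-1}}+u$ equals $\ker \tr_{\gf_{q^2}/\gf_4}$; squaring then identifies $T = \ker \tr_{\gf_{q^2}/\gf_4}$, which in particular is a $\gf_4$-subspace of index $4$ in $\gf_{q^2}$.

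Setting $B := b/a^{q+2}$ and $E := 1 + \gamma/a^{q+1}$, the three equations reformulate as $\lambda^{-1} B + E \in T$ for $\lambda\in\gf_4^*$. Because $T$ is $\gf_4$-closed, adding any two of these forces $\omega B\in T$, hence $B\in T$ and $E\in T$, and then all three conditions hold together. Therefore $T(a,b)\in\{0,4,12\}$. The main obstacle in the proof is ruling out $T(a,b)=12$: since $B$ ranges over all of $\gf_{q^2}^*$ as $b$ does, this forces me to prove the uniform statement that $E\notin T$ for every $a\in\gf_{q^2}^*$.

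For this crucial step I exploit the hypothesis $\mathrm{Ord}(\gamma^{q-1})=3$. Because $n$ is odd, $3\mid q+1$, so $\gamma^{q-1}$ is a primitive cube root of unity in $\gf_4^*$; a direct check then forces $\gamma = \mu\omega^i$ for some $\mu\in\gf_q^*$ and $i\in\{1,2\}$. Since the norm $a\mapsto a^{q+1}$ is surjective onto $\gf_q^*$, it suffices to prove $1+\omega^i s\notin T$ for every $s\in\gf_q$. By $\gf_4$-linearity, $\tr_{\gf_{q^2}/\gf_4}(1+\omega^i s) = 1 + \omega^i\,\tr_{\gf_{q^2}/\gf_4}(s)$, and because $n$ odd makes $j\mapsto 2j\bmod n$ a permutation of $\{0,\ldots,n-1\}$, one verifies $\tr_{\gf_{q^2}/\gf_4}(s) = \tr_{\gf_q/\gf_2}(s)\in\gf_2$ for $s\in\gf_q$. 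The value $1+\omega^i\tr_{\gf_q/\gf_2}(s)$ then lies in $\{1,\,1+\omega^i\}\subset\gf_4^*$ and is never zero, excluding $T(a,b)=12$. Finally, $\delta_f=4$ is attained by taking $a=1$ and any $b\notin T$: since $B=b\notin T$ and $E\notin T$, exactly one of the three cosets $B,\omega B,\omega^2 B$ coincides with $E$ modulo $T$, forcing exactly one solvable $\lambda$ and thus $T(1,b)=4$.
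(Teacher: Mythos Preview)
Your proof is correct, and its opening agrees with the paper: both reduce to $y=x+\lambda a$ with $\lambda\in\gf_4^*$, substitute $x=\lambda a\,u$, and land on $u^q+u^2=c_\lambda$ with each fibre of size $0$ or $4$. The divergence is in the exclusion step. The paper passes through the auxiliary quartic $x^4+x=c$, invokes the Leonard--Williams factorisation criterion (Lemma~\ref{lem-quartics}) to get the three absolute-trace conditions $\tr_{q^2}(c)=\tr_{q^2}(\omega c)=\tr_{q^2}(\omega^2 c)=0$, assumes two of $c_1,c_2,c_3$ satisfy them, writes each as an Artin--Schreier value $z_i+z_i^2$, and manipulates these until a contradiction $1=0$ drops out. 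You instead identify the image $T=\{u^q+u^2\}$ directly as $\ker\tr_{\gf_{q^2}/\gf_4}$, which packages the paper's three absolute-trace conditions into one $\gf_4$-linear condition. The $\gf_4$-linearity of $T$ then gives the clean trichotomy $T(a,b)\in\{0,4,12\}$ with $T(a,b)=12$ iff $B,E\in T$, and the hypothesis $\mathrm{Ord}(\gamma^{q-1})=3$ is used once, via $\gamma=\mu\omega^i$ with $\mu\in\gf_q^*$, to compute $\tr_{\gf_{q^2}/\gf_4}(E)\in\{1,1+\omega,1+\omega^2\}\subset\gf_4^*$. This is more structural and avoids both the quartic lemma and the $z_i$ bookkeeping; it also makes transparent why the order-$3$ hypothesis on $\gamma^{q-1}$ is exactly what is needed. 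Your final coset argument showing that $\delta_f=4$ is actually attained is a nice addition; the paper relies implicitly on the known differential uniformity for the lower bound.
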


\begin{proof}
	It suffices to prove that for $a,b\in\gf_{q^2}^{*}$, the equation system 
\begin{equation*}
\left\{
\begin{array}{lr}
f(x+a) + f(y+a) = b,  \\
f(x) + f(y) = b.
\end{array}
\right.
\end{equation*}
  i.e., 
 \begin{equation*}
 \left\{
 \begin{array}{lr}
 f(x+a) + f(y+a) =  f(x) + f(y),  \\
 f(x) + f(y) = b.
 \end{array}
 \right.
 \end{equation*}
   has at most $4$ solutions in $\gf_{q^2}\times \gf_{q^2}$. Thanks to $$f(x+a)=(x+a)^{q+2}+\gamma(x+a)=x^{q+2}+a^2x^q+a^qx^2+a^{q+2}+\gamma x+\gamma a,$$  simplifying the above equation system, we have 
\begin{subequations} 		\label{xq+2+x-BCT}
	\renewcommand\theequation{\theparentequation.\arabic{equation}}     
	\begin{empheq}[left={\empheqlbrace\,}]{align}
 a^2(x+y)^q+a^q(x+y)^2=0 \label{xq+2+x-BCT:1A} \\ 
x^{q+2}+\gamma x +y^{q+2} + \gamma y = b. \label{xq+2+x-BCT:1B} 
	\end{empheq}
\end{subequations}

   Let $z=x+y$. Then from (\ref{xq+2+x-BCT:1A}), we obtain $z=\beta a$, where $\beta =1, \omega$ or $\omega^2$, since $\gcd\left(q-2,q^2-1\right)=3$ when $n$ is odd.  
   Moreover, $y=x+z=x+\beta a$. Plugging it into (\ref{xq+2+x-BCT:1B}), we have 
   \begin{equation*}
   \beta^2a^2x^q+a^q\beta^2x^2+ \beta a^{q+2}+ \gamma \beta a +b=0.
   \end{equation*}
   Let $x=aX$. Then the above equation becomes 
     \begin{equation}
   \label{xq+2+x-BCT-x}
   X^q+X^2=c,
   \end{equation}
   where $c=\frac{\beta a^{q+2}+\gamma \beta a + b}{\beta^2 a^{q+2}}.$  
   Let $L(x)= x^q + x^2$. Since $L(x)=0$ has $4$ solutions in $\gf_{q^2}$, $L(x)$ is a $4$-to-$1$ polynomial over $\gf_{q^2}$. Thus Eq. (\ref{xq+2+x-BCT-x}) has $4$ or $0$ solutions in $\gf_{q^2}$ for a given $\beta$. In the following, we show that given any $a,b\in\gf_{q^2}^{*}$, for three cases $\beta= 1, \omega, \omega^2$, Eq. (\ref{xq+2+x-BCT-x}) can not have solutions at the same time.   First of all, we show that equations $x^q+x^2=c$ and $x^4+x=c$ have solutions or no solutions in $\gf_{q^2}$ at the same time. On one side, if $x_0\in\gf_{q^2}$ is a solution of $x^4+x=c$, let $x_1 = x_0^q+x_0^2$. Then $x_1^q+x_1^2 = x_0+x_0^{2q}+x_0^{2q}+x_0^4 = c$, which means that $x_1$ is a solution of $x^q+x^2=c$. On the other side, if $x_0$ is a solution of $x^q+x^2=c$, let $x_1 = x_0^{2^{n-2}} + x_0^{2^{n-4}} + \cdots + x_0^2$. Then $x_1^4+x_1 = x_0^{2^n}+x_0^2 = c$, claiming that $x_1$ is a solution of $x^4+x=c$. Therefore, given any  $a,b\in\gf_{q^2}^{*}$,  we suffice to consider equation 
   \begin{equation}
   \label{x4+x=c}
   x^4+x=c
   \end{equation}
   can not have solutions at the same time in $\gf_{q^2}$ for three cases $\beta= 1, \omega, \omega^2$, where $c=\frac{\beta a^{q+2}+\gamma \beta a + b}{\beta^2 a^{q+2}}.$ Let $c_1=\frac{a^{q+2}+\gamma a+b}{a^{q+2}}$, $c_2=\frac{\omega a^{q+2}+\gamma \omega a+b}{\omega^2a^{q+2}}$ and $c_3=\frac{\omega^2a^{q+2}+\gamma \omega^2 a+b}{\omega a^{q+2}}.$
    
   Let $g(x)=x^3+1$. Then it is trivial that $g(x)$ has three roots $x = 1, \omega, \omega^2$ in $\gf_{q^2}$. Moreover, according to Lemma \ref{lem-quartics}, Eq. (\ref{x4+x=c}) has four solutions in $\gf_{q^2}$ if and only if $\tr_{q^2}(c)=\tr_{q^2}(\omega c)=\tr_{q^2}(\omega^2 c)=0$. Without loss of generality, we assume that equations $x^4+x=c_1$ and $x^4+x=c_2$ has four solutions in the meantime. Thus we have $\tr_{q^2}\left(c_1\right)=\tr_{q^2}\left(\omega c_1\right)=\tr_{q^2}\left(c_2\right)=\tr_{q^2}\left(\omega c_2\right)=0$. In addition, $\tr_{q^2}(c)=0$ if and only if there exist some $z\in\gf_{q^2}$ such that $c=z+z^2$. Therefore, there exist $z_1, z_2, z_3, z_4\in\gf_{q^2}$ such that $c_1 = z_1 + z_1^2$, $\omega c_1 = z_2 + z_2^2$, $c_2 = z_3 + z_3^2$  and $\omega c_2 = z_4 + z_4^2$. Moreover, we have $\omega(z_2+z_2^2) = (z_1+z_2)+(z_1+z_2)^2$ and $\omega (z_3+z_3^2)=z_4+z_4^2$. Plugging $c_1=\frac{a^{q+2}+\gamma a+b}{a^{q+2}}$, $c_2=\frac{\omega a^{q+2}+\gamma \omega a+b}{\omega^2a^{q+2}}$ into $\omega c_1 = z_2 + z_2^2$ and $c_2 = z_3 + z_3^2$, we have 
   \begin{equation}
   \label{z2}
   \frac{\omega\gamma}{a^{q+1}}+\frac{\omega b}{a^{q+2}} = z_2+z_2^2+\omega
   \end{equation}
   and 
   \begin{equation}
   \label{z3}
   \frac{\omega^2 \gamma}{a^{q+1}} + \frac{\omega b}{a^{q+2}} =z_3+z_3^2+\omega^2.
   \end{equation}
   Adding Eq. (\ref{z2}) and Eq. (\ref{z3}), we obtain 
   \begin{equation}
   \label{z2z3}
   \frac{\gamma}{a^{q+1}}=z_2+z_2^2+z_3+z_3^2+1.
   \end{equation}
   Since  $\mathrm{Ord}\left(\gamma^{q-1}\right)=3$, $\gamma^{3q}=\gamma^3$. Furthermore, $\gamma^q=\beta \gamma$, where $\beta = \omega$ or $\omega^2$. In the following, we assume $\beta=\omega$ and the other case is similar.
   Raising  Eq. (\ref{z2z3}) into its $q$-th power, we have 
   \begin{equation}
   \label{z2z3q}
   \frac{\omega \gamma}{a^{q+1}}=z_2^q+z_2^{2q}+z_3^q+z_3^{2q}+1.
   \end{equation}
   Adding $\omega\times$(\ref{z2z3}) and (\ref{z2z3q}), we get 
   $$\omega\left(z_2+z_2^2+z_3+z_3^2+1\right)=z_2^q+z_2^{2q}+z_3^q+z_3^{2q}+1,$$
   i.e., 
   $$(z_1+z_2)+(z_1+z_2)^2+z_4+z_4^2+\omega=z_2^q+z_2^{2q}+z_3^q+z_3^{2q}+1.$$
   Thus $$\tr_{q^2}\left((z_1+z_2)+(z_1+z_2)^2+z_4+z_4^2+\omega\right)=\tr_{q^2}\left(z_2^q+z_2^{2q}+z_3^q+z_3^{2q}+1\right).$$
   Since $\tr_{q^2}(\omega)=1$ and $\tr_{q^2}(1)=0$, we have $1=0$ from the above equation. Contradictions! 
   
    Therefore, Eq. (\ref{xq+2+x-BCT}) has at most $4$ solutions in $\gf_{q^2}\times \gf_{q^2}$.  That is to say, $\delta_f=4.$

We have finished the proof.   
\end{proof}

It is well known that given a permutation polynomial over $\gf_q$, it is very hard to compute its explicit compositional inverse. In \cite{LQW2018}, the authors introduced an approach to compute the explicit expression for compositional inverses of permutation polynomials of the form $x^rh\left(x^s\right)$ over $\gf_q$, where $s\mid (q-1)$ and $\gcd(r,q-1)=1$. Their main idea is to transform the problem of computing the compositional inverses of permutation polynomials of the form $x^rh\left(x^s\right)$ into that of computing the compositional inverses of two restricted permutation mappings, i.e., $x^r$ over $\gf_q$ and $x^rh(x)^s$ over the set of $(q-1)/s$-th roots of unity in $\gf_q^{*}$. Furthermore, they computed the explicit compositional inverses of the permutation polynomials in Theorem \ref{Zieveq+13} for arbitrary $q$. 

\begin{Lemma}
	\label{inverse-xq+2+x}
	\cite{LQW2018}
	Let $f(x)=x^{q+2}+\gamma x\in\gf_{q^2}[x]$ be a permutation polynomial over $\gf_{q^2}$.
	\begin{enumerate}[(1)]
		\item If $q\equiv2\pmod3$, then the compositional inverse of $f(x)$ over $\gf_{q^2}$ is $$f^{-1}(x)=x^{q^2-q-1}\left(\left(x^{q+1}+\epsilon^3\right)^{2\cdot3^{-1}}-(2\epsilon-\gamma^q)\left(x^{q+1}+\epsilon^3\right)^{3^{-1}}+\epsilon^2-\epsilon\gamma^q\right),$$
		where $\epsilon=\frac{\gamma^q+\gamma}{3}$ and $3^{-1}=\frac{2q-1}{3}$ is the compositional inverse of $3$ modulo $q-1$.
		\item If $q\equiv0\pmod3$, i.e., $q=3^n$, then the compositional inverse of $f(x)$ over $\gf_{q^2}$ is
		$$f^{-1}(x)=-\left(\sum_{i=0}^{n-1}\gamma^{{-3^{i+1}+1}}x^{3^{i}(q+1)}+\gamma \right)\left(\sum_{i=0}^{n-1}\gamma^{{-3^{i+1}+1}}x^{3^{i}(q+1)-q}\right).$$
	\end{enumerate}
\end{Lemma}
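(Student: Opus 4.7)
The plan is to adopt the framework of \cite{LQW2018} summarized just before the lemma: since $f(x)=x\,(x^{q+1}+\gamma)$ has the form $x^{r}h(x^{s})$ with $r=1$, $s=q+1$, $h(y)=y+\gamma$, $\gcd(r,q^{2}-1)=1$ and $s\mid(q^{2}-1)$, the inversion reduces to (i) inverting the map $t\mapsto t\,(t+\gamma)^{q+1}=t(t+\gamma)(t+\gamma^{q})=:g(t)$ on $\gf_{q}$, followed by (ii) solving $x(t+\gamma)=y$ for $x\in\gf_{q^{2}}$ once $t=x^{q+1}$ is known. The setup of (ii) uses the $\gf_{q^{2}}/\gf_{q}$ norm trick $(t+\gamma)^{-1}=(t+\gamma^{q})/\bigl((t+\gamma)(t+\gamma^{q})\bigr)$, and the reduction to (i) follows from $y^{q+1}=x^{q+1}(x^{q+1}+\gamma)^{q+1}=g(x^{q+1})$ since $x^{q+1}\in\gf_{q}$.

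For Case $(1)$, cube roots in $\gf_{q}$ are well defined because $\gcd(3,q-1)=1$, via $t^{1/3}=t^{(2q-1)/3}$. I will depress $g$ by $t=u-\epsilon$ with $\epsilon=(\gamma+\gamma^{q})/3$, which kills the quadratic term. The crucial algebraic input, extracted from the hypothesis that $\zeta:=\gamma^{q-1}$ has order $6$ (if $q\equiv 5\pmod 6$) or order $3$ (if $q\equiv 2\pmod 6$), is the identity $\gamma^{q+1}=3\epsilon^{2}$: it follows from $\zeta^{2}-\zeta+1=0$ in odd characteristic and from its characteristic-$2$ analogue $\zeta^{2}+\zeta+1=0$. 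This same identity annihilates the linear term too, collapsing $g(u-\epsilon)=u^{3}-\epsilon^{3}$, so $u=(y^{q+1}+\epsilon^{3})^{3^{-1}}$ and $t=u-\epsilon$. To cast $x=y/(t+\gamma)$ as a polynomial, I will use $y^{q^{2}-q-1}=y^{-q}$ on $\gf_{q^{2}}^{*}$ and identify the quadratic factor of the stated formula by direct expansion of
\[
(u+\gamma-\epsilon)\bigl(u^{2}-(2\epsilon-\gamma^{q})u+\epsilon^{2}-\epsilon\gamma^{q}\bigr):
\]
the $u^{2}$-coefficient reduces to $\gamma+\gamma^{q}-3\epsilon=0$, the $u$-coefficient to $\gamma^{q+1}-3\epsilon^{2}=0$, and the constant to $-\epsilon^{3}$, so the product equals $u^{3}-\epsilon^{3}=y^{q+1}$; dividing by $y^{q}$ yields the claimed closed form.

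For Case $(2)$, $\gamma^{q}=-\gamma$ gives $\gamma+\gamma^{q}=0$ and $\gamma^{q+1}=-\gamma^{2}$, so $g(t)=t^{3}-\gamma^{2}t$ is an $\gf_{3}$-linearized polynomial. I will invert it by iterating the recursion $t^{3}=\gamma^{2}t+z$ with $z=y^{q+1}$ and proving by induction the closed form
\[
t^{3^{k}}=\gamma^{3^{k}-1}t+\sum_{i=0}^{k-1}\gamma^{3^{k}-3^{i+1}}z^{3^{i}}.
\]
At $k=n$, using $t^{q}=t$ and $\gamma^{q-1}=-1$ gives $2t=\sum_{i=0}^{n-1}\gamma^{q-3^{i+1}}z^{3^{i}}$, and substituting $\gamma^{q}=-\gamma$ termwise yields $t=\sum_{i=0}^{n-1}\gamma^{1-3^{i+1}}y^{3^{i}(q+1)}=:S(y)$. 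Finally, $(t+\gamma)(t-\gamma)=t^{2}-\gamma^{2}=y^{q+1}/t$ inverts $t+\gamma$ in $\gf_{q^{2}}$, and factoring out the common $y^{-q}$ (since $y/y^{q+1}=y^{-q}$) turns the second factor $y^{-q}\cdot S(y)$ into $\sum_{i=0}^{n-1}\gamma^{1-3^{i+1}}y^{3^{i}(q+1)-q}$, matching the product-of-two-sums form of the lemma.

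The main obstacle I anticipate is the verification in Case $(1)$: both coefficient vanishings in the expansion of the quadratic-linear product depend on the single nontrivial identity $\gamma^{q+1}=3\epsilon^{2}$, which must be derived uniformly from the two separate order hypotheses on $\gamma^{q-1}$ and in both the odd-characteristic and characteristic-$2$ subcases. Once that identity is secured, the remaining algebra in both cases is routine exponent bookkeeping; the Case $(2)$ induction becomes straightforward once the exponent pattern $\gamma^{3^{k}-3^{i+1}}$ is guessed at the outset.
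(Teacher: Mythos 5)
The paper itself offers no proof of this lemma: it is quoted from \cite{LQW2018}, with only a one-paragraph sketch of that paper's $x^rh(x^s)$ framework preceding it. Your proposal implements exactly that framework (invert $g(t)=t(t+\gamma)(t+\gamma^q)$ on $\gf_q$, then divide $y$ by $t+\gamma$ via the norm), and your Case (1) is sound: the identity $\gamma^{q+1}=3\epsilon^2$ is equivalent to $\zeta^2-\zeta+1=0$ for $\zeta=\gamma^{q-1}$, which holds both when $\zeta$ has order $6$ in odd characteristic and when $\zeta$ has order $3$ in characteristic $2$; with it $g(u-\epsilon)=u^3-\epsilon^3$, and your quadratic--linear factorization of $u^3-\epsilon^3$ checks out coefficient by coefficient, yielding the stated formula for part (1).

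Part (2), however, does not close as you claim. Your (correct) computation gives $t=S(y)$ with $S(y)=\sum_{i=0}^{n-1}\gamma^{1-3^{i+1}}y^{3^i(q+1)}$, and then $x=y/(t+\gamma)=t(t-\gamma)y^{-q}=\bigl(S(y)-\gamma\bigr)\cdot\bigl(y^{-q}S(y)\bigr)$, whereas the lemma asserts $-\bigl(S(y)+\gamma\bigr)\cdot\bigl(y^{-q}S(y)\bigr)$; in characteristic $3$ these differ by $S(y)^2y^{-q}$, so your closing sentence ``matching the product-of-two-sums form of the lemma'' is not true as written. A concrete check at $q=3$, $\gamma^2=-1$, $f(x)=x^5+\gamma x$ over $\gf_9$: $f(1)=1+\gamma$, your expression applied to $1+\gamma$ returns $1$, while the printed formula returns $-\gamma\neq 1$. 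So either the statement as transcribed here carries a sign error in the first factor (the exponent should be $q-3^{i+1}$ rather than $-3^{i+1}+1$, since $\gamma^{q-3^{i+1}}=-\gamma^{1-3^{i+1}}$), or you must stop at your own closed form and flag the discrepancy; as it stands, the final identification in part (2) is a false step, even though everything leading up to it is right.
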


Together with Theorem \ref{xq+2+gamx} and Lemma \ref{inverse-xq+2+x}, we can obtain the following result.

\begin{Cor}
		Let $q=2^n$, $n$ be odd and $$f(x)=x^{q^2-q-1}\left(\left(x^{q+1}+\epsilon^3\right)^{2\cdot3^{-1}}+\gamma^q\left(x^{q+1}+\epsilon^3\right)^{3^{-1}}+\epsilon^2+\epsilon\gamma^q\right)\in\gf_{q^2}[x],$$ where $\mathrm{Ord}\left(\gamma^{q-1}\right)=3$, $\epsilon=\gamma^q+\gamma$ and $3^{-1}=\frac{2q-1}{3}$ is the compositional inverse of $3$ modulo $q-1$. Then $\delta_f=4$.
\end{Cor}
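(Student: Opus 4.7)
The plan is to derive this corollary by combining Theorem~\ref{xq+2+gamx}, Lemma~\ref{inverse-xq+2+x}(1), and the invariance of the boomerang uniformity under compositional inversion (noted just before Theorem~\ref{BCT-def-2} and due to \cite[Proposition 2]{BC2018}).

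First, I would verify the hypotheses that allow applying Theorem~\ref{xq+2+gamx} to $g(x) = x^{q+2} + \gamma x$. Since $q = 2^n$ with $n$ odd, one has $q \equiv 2 \pmod 6$ and in particular $q \equiv 2 \pmod 3$. Together with $\mathrm{Ord}(\gamma^{q-1}) = 3$, Lemma~\ref{Zieveq+13}(2) guarantees that $g$ is a permutation of $\gf_{q^2}$, and Theorem~\ref{xq+2+gamx} then gives $\delta_g = 4$.

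Second, I would check that the polynomial $f(x)$ displayed in the corollary is precisely the compositional inverse of $g(x)$. This amounts to specialising the formula of Lemma~\ref{inverse-xq+2+x}(1) (which applies because $q \equiv 2 \pmod 3$) to characteristic $2$. In $\gf_{2^n}$ we have $-1 = 1$ and $2 = 0$, and $3$ is invertible modulo $q-1$ with $3^{-1} = (2q-1)/3$; hence $\epsilon = (\gamma^q + \gamma)/3$ reduces to $\gamma^q + \gamma$, the coefficient $-(2\epsilon - \gamma^q)$ becomes $\gamma^q$, and $\epsilon^2 - \epsilon\gamma^q$ becomes $\epsilon^2 + \epsilon\gamma^q$. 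With these substitutions the explicit expression in Lemma~\ref{inverse-xq+2+x}(1) matches the formula defining $f(x)$, so $f = g^{-1}$.

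Third, I would invoke the inversion-invariance of the BCT. Writing $T_g$ and $T_{g^{-1}}$ for the BCTs of $g$ and $g^{-1}$, \cite[Proposition~2]{BC2018} yields $T_g(a,b) = T_{g^{-1}}(b,a)$ for every $a,b \in \gf_{q^2}^{*}$; taking the maximum over both arguments then gives $\delta_{g^{-1}} = \delta_g$. Combining this with the previous two steps yields $\delta_f = \delta_{g^{-1}} = \delta_g = 4$.

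The main obstacle, and really the only nontrivial point, is the bookkeeping in the second step: carefully matching the characteristic-zero-style formula of Lemma~\ref{inverse-xq+2+x}(1) with the characteristic-$2$ expression in the corollary, making sure that the exponents $3^{-1}$ and $2\cdot 3^{-1}$ are interpreted modulo $q-1$ and that the sign simplifications are done consistently. Once that identification is made, the rest is an immediate concatenation of the previously established results.
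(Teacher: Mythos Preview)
Your proposal is correct and follows exactly the route the paper intends: the paper simply states that the corollary follows by combining Theorem~\ref{xq+2+gamx} with Lemma~\ref{inverse-xq+2+x}, and your three steps (showing $g(x)=x^{q+2}+\gamma x$ has $\delta_g=4$, specialising Lemma~\ref{inverse-xq+2+x}(1) to characteristic~$2$ to identify $f=g^{-1}$, and invoking $T_g(a,b)=T_{g^{-1}}(b,a)$) are precisely the ingredients implicit in that one-line justification. The characteristic-$2$ bookkeeping you spell out (in particular $3=1$ in $\gf_{q^2}$ so that $\epsilon=(\gamma^q+\gamma)/3=\gamma^q+\gamma$, and $-(2\epsilon-\gamma^q)=\gamma^q$) is correct.
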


\section{Conclusion}

\mkq{Boomerang Connectivity Table (BCT for short) is a new cryptanalysis tool introduced by Cid et al. \cite{BCT2018} in EUROCRYPT 2018, to  evaluate the subtleties of boomerang-style attacks. In this paper, we give some new properties about BCT and the boomerang uniformity. 
Firstly, we give an equivalent and simple {formula to compute}  BCT and the boomerang uniformity. The advantage of our new method is not only that the compositional inverse is not needed, but also that the definition of BCT and the boomerang uniformity can be generalized. Secondly, we give a characterization of $\delta_{f}$-uniform BCT functions by means of the Walsh transform. In particular, a new equivalent characterization about APN functions is presented. Thirdly, we consider boomerang uniformities of some special permutations with low differential uniformity. Finally, we obtain a new class of $4$-uniform BCT permutations over $\gf_{2^n}$, which is the first binomial. It is worth
mentioning that it seems not easy to compute the boomerang uniformity of this class of binomial from the original
definition directly .

From \cite{BC2018} and our results, we can see that there exist $2$-uniform BCT permutation over $\gf_{2^n}$, where $n$ is odd and $4$-uniform BCT permutations over $\gf_{2^n}$, where $n\equiv2\pmod4$. However, for the case $n\equiv0\pmod4$, which are very widely used in cryptographic algorithm, we can not find any permutations over $\gf_{2^n}$ with boomerang uniformity $4$ up to now and it is our next goal.}

\mkq{
	\bigskip\noindent
	{\bf Acknowledgment} \
	We would like to thank the editor and the anonymous referees whose valuable comments and suggestions improve both the technical quality and the editorial quality of this paper.}

\appendix
{\bfseries The whole proof of Theorem \ref{BCT-inv-switch}}
\begin{proof}
	It suffices to consider the numbers of solutions in $\gf_{2^n}\times \gf_{2^n}$ of the following equation system for $a,b\in\gf_{2^n}^{*}$,
\begin{equation}
\label{BCT-uniform-inv-switch}
\left\{
\begin{array}{lr}
f(x+a) + f(y+a) = b, \\
f(x) + f(y) = b.
\end{array}
\right.
\end{equation}

	{\bfseries Case 1: $a=1$.} 
	
	{\bfseries Subcase 1.1: $a=1, b=1$.} In the subcase, (\ref{BCT-uniform-inv-switch}) becomes 
\begin{equation}
	\label{BCT-uniform-inv-switch-1.1}
\left\{
\begin{array}{lr}
f(x+1) + f(y+1) = 1, \\
f(x) + f(y) = 1.
\end{array}
\right.
\end{equation}
	
	(i) If $x=0$ or $1$, $y=1$ or $0$, respectively. Thus $(x,y)=(0,1), (1,0)$ are two solutions of (\ref{BCT-uniform-inv-switch-1.1});
	
	(ii) If $x\neq 0, 1$, $y\neq 0, 1$, either. Then (\ref{BCT-uniform-inv-switch-1.1}) becomes 
\begin{equation}
	\label{BCT-uniform-inv-switch-1.1-1}
\left\{
\begin{array}{lr}
	\frac{1}{x+1} + \frac{1}{y+1} = 1, \\
\frac{1}{x}  + \frac{1}{y} = 1.
\end{array}
\right.
\end{equation}
	
	After simplifying (\ref{BCT-uniform-inv-switch-1.1-1}), we obtain $x^2+x+1=0$. Therefore, when $n\equiv1\pmod2$, (\ref{BCT-uniform-inv-switch-1.1-1}) has no solutions in $\gf_{2^n} \times \gf_{2^n}$. When $n\equiv0\pmod2$, (\ref{BCT-uniform-inv-switch-1.1-1}) has two solutions $(x,y)=(\omega,\omega^2)$ and $(\omega^2,\omega)$.
	
	Hence, in the subcase $a=1, b=1$, the number of solutions in $\gf_{2^n}\times\gf_{2^n}$ of (\ref{BCT-uniform-inv-switch}) is $2$ (when $n\equiv1\pmod2$) or $4$ (when $n\equiv0\pmod2$).
	
	{\bfseries Subcase 1.2: $a=1$, $b=\omega$ or $\omega^2$.} What should be noticed is that the subcase is under the case $n$ even. In the subcase, we only consider the case $b=\omega$ since the other case is similar. Obviously, (\ref{BCT-uniform-inv-switch}) becomes 
\begin{equation}
	\label{BCT-uniform-inv-switch-1.2}
\left\{
\begin{array}{lr}
	f(x+1) + f(y+1) = \omega, \\
f(x) + f(y) = \omega.
\end{array}
\right.
\end{equation}

	(i) If $x=0$, $f(y+1)=\omega$ and $f(y)=\omega+1=\omega^2$ from (\ref{BCT-uniform-inv-switch-1.2}), which means $y=\omega$. Similarly, when $x=1$, $y=\omega^2$. Thus $(0,\omega), (\omega,0), (1,\omega^2)$ and $(\omega^2,1)$ are four solutions of (\ref{BCT-uniform-inv-switch-1.2}).
	
	(ii) If $x\neq0,1$, (\ref{BCT-uniform-inv-switch-1.2}) becomes
\begin{equation}
	\label{BCT-uniform-inv-switch-1.2-1}
\left\{
\begin{array}{lr}
	\frac{1}{x+1} + \frac{1}{y+1} = \omega, \\
\frac{1}{x}  + \frac{1}{y} =  \omega.
\end{array}
\right.
\end{equation}

	After simplifying (\ref{BCT-uniform-inv-switch-1.2-1}), we have $x^2+x+\omega^2=0$, which has two solutions $x_0,x_0+1$ in $\gf_{2^n}$ if and only if $n\equiv0\pmod4$. Moreover, when $n\equiv0\pmod4$, $(x_0,x_0+1)$ and $(x_0+1,x_0)$ are two solutions in $\gf_{2^n}\times \gf_{2^n}$ of (\ref{BCT-uniform-inv-switch-1.2-1}). 
	
	Hence, in the subcase $n\equiv0\pmod2$, $a=1, b=\omega$ or $\omega^2$, the number of solutions in $\gf_{2^n}\times\gf_{2^n}$ of (\ref{BCT-uniform-inv-switch}) is $4$ (when $n\equiv2\pmod4$) or $6$ (when $n\equiv0\pmod4$).
	
	{\bfseries Subcase 1.3: $a=1$, $b\neq\{1,\omega,\omega^2\}$.  } In the subcase, (\ref{BCT-uniform-inv-switch}) becomes 
\begin{equation}
	\label{BCT-uniform-inv-switch-1.3}
\left\{
\begin{array}{lr}
	f(x+1) + f(y+1) = b, \\
f(x) + f(y) = b.
\end{array}
\right.
\end{equation}

	(i) If $x=0$, $f(y)=b+1$ and $f(y+1)=b$. Furthermore, $y=\frac{1}{b+1}$ and $y=\frac{1}{b}+1$, which means $\frac{1}{b+1}=\frac{1}{b}+1$ and $b=\omega, \omega^2$. Contradictions! Thus $x\neq0$. Similarly, we also can obtain $x\neq1$, $y\neq0,1$. 
	
	(ii) If $x\neq0,1$ and $y\neq0,1$, we have 
\begin{equation}
\label{BCT-uniform-inv-switch-1.3-1}
\left\{
\begin{array}{lr}
\frac{1}{x+1} + \frac{1}{y+1} = b , \\
\frac{1}{x}  + \frac{1}{y} =  b.
\end{array}
\right.
\end{equation}

	From the above equation system, we get $bx^2+bx+1=0$, which at most two solutions in $\gf_{2^n}$. Moreover, in the subcase, (\ref{BCT-uniform-inv-switch}) has at most two solutions in $\gf_{2^n}\times\gf_{2^n}$. 
	
	Therefore, in the case $a=1$,  the maximum of numbers of solutions in $\gf_{2^n}\times\gf_{2^n}$ of (\ref{BCT-uniform-inv-switch}) is
	\begin{equation*}
	\left\{
	\begin{aligned}
	2,  ~~\text{if}~~ n\equiv1\pmod2, \\
	4,  ~~\text{if}~~ n\equiv2\pmod4, \\
	6,   ~~\text{if}~~ n\equiv0\pmod4.
	\end{aligned}
	\right.
	\end{equation*}
	
	{\bfseries Case 2: $a=\omega, \omega^2$.} In the case, we should notice that $n\equiv0\pmod2$ and only consider $a=\omega$ since the other case $a=\omega^2$ is similar.  
	
	{\bfseries Subcase 2.1:  $b=1$. } In the subcase, (\ref{BCT-uniform-inv-switch}) becomes 
\begin{equation}
	\label{BCT-uniform-inv-switch-2.1}
\left\{
\begin{array}{lr}
	f(x+\omega) + f(y+\omega) = 1, \\
f(x) + f(y) = 1. 
\end{array}
\right.
\end{equation}
%

	(i) If 	$x=0, 1$, $y=1, 0$ respectively. Thus $(0,1)$ and $(1,0)$ are two solutions of (\ref{BCT-uniform-inv-switch-2.1}).
	
	(ii) If $x=\omega, \omega^2$, $y=\omega^2, \omega$ respectively. Thus $(\omega, \omega^2)$ and $(\omega^2, \omega)$ are two solutions of (\ref{BCT-uniform-inv-switch-2.1}).
	
	(ii) If $x\neq 0, 1, \omega, \omega^2$, (\ref{BCT-uniform-inv-switch-2.1}) becomes 
\begin{equation}
\label{BCT-uniform-inv-switch-2.1-1}
\left\{
\begin{array}{lr}
\frac{1}{x+\omega} + \frac{1}{y+\omega} = 1, \\
\frac{1}{x}  + \frac{1}{y} = 1. 
\end{array}
\right.
\end{equation}

	After simplifying (\ref{BCT-uniform-inv-switch-2.1-1}), we have $y=\frac{x}{x+1}$ and $x^2+\omega x+\omega=0$, which has two solutions $x_0,x_0+\omega$ in $\gf_{2^n}$ if and only if $n\equiv0\pmod4$. Moreover, when $n\equiv0\pmod4$, due to $\frac{x_0}{x_0+1}=x_0+\omega$, $(x_0,x_0+\omega)$ and $(x_0+\omega,x_0)$ are two solutions in $\gf_{2^n}\times \gf_{2^n}$ of (\ref{BCT-uniform-inv-switch-2.1-1}). 
	
	Hence, in the subcase $n\equiv0\pmod2$, $a=\omega, b=1$, the number of solutions in $\gf_{2^n}\times\gf_{2^n}$ of (\ref{BCT-uniform-inv-switch}) is $4$ (when $n\equiv2\pmod4$) or $6$ (when $n\equiv0\pmod4$). 
	
	{\bfseries Subcase 2.2: $b=\omega$.}	 In the subcase, (\ref{BCT-uniform-inv-switch}) becomes
\begin{equation}
	\label{BCT-uniform-inv-switch-2.2}
\left\{
\begin{array}{lr}
f(x+\omega) + f(y+\omega) = \omega, \\
f(x) + f(y) = \omega. 
\end{array}
\right.
\end{equation}
	
	(i) If $x=0, 1, \omega, \omega^2$, $y= \omega, \omega^2, 0, 1$ respectively. Thus $(0,\omega), (\omega,0), (1, \omega^2) $ and $(\omega^2,1)$ are four solutions of (\ref{BCT-uniform-inv-switch-2.2}).
	
	(ii) If $x\neq 0, 1, \omega, \omega^2$, we have 
\begin{equation}
\label{BCT-uniform-inv-switch-2.2-1}
\left\{
\begin{array}{lr}
	\frac{1}{x+\omega} + \frac{1}{y+\omega} = \omega, \\
\frac{1}{x}  + \frac{1}{y} = \omega. 
\end{array}
\right.
\end{equation}

	After simplifying (\ref{BCT-uniform-inv-switch-2.2-1}), we have $y=\frac{x}{\omega x+1}$ and $x^2+\omega x+1=0$, which has two solutions $x_0, x_0+\omega$ if and only if $n\equiv0\pmod4$. Furthermore, when $n\equiv0\pmod4$, $(x_0,x_0+\omega), (x_0+\omega, x_0) $ are two solutions of (\ref{BCT-uniform-inv-switch-2.2-1}) and when $n\equiv2\pmod4$, (\ref{BCT-uniform-inv-switch-2.2-1}) has no solutions in $\gf_{2^n}\times\gf_{2^n}$.   
	
	Hence, in the subcase $n\equiv0\pmod2$, $a=\omega, b=\omega$, the number of solutions in $\gf_{2^n}\times\gf_{2^n}$ of (\ref{BCT-uniform-inv-switch}) is $4$ (when $n\equiv2\pmod4$) or $6$ (when $n\equiv0\pmod4$). 
	
	{\bfseries Subcase 2.3: $b=\omega^2$.}	 In the subcase, (\ref{BCT-uniform-inv-switch}) becomes 
\begin{equation}
	\label{BCT-uniform-inv-switch-2.3}
\left\{
\begin{array}{lr}
\frac{1}{x+\omega} + \frac{1}{y+\omega} = \omega, \\
\frac{1}{x}  + \frac{1}{y} = \omega. 
\end{array}
\right.
\end{equation}

	(i) If $x=0, 1, \omega, \omega^2$, $y=\omega^2,\omega, 1, 0$ respectively. Thus $(0,\omega^2), (\omega^2,0), (1, \omega) $ and $(\omega,1)$ are four solutions of (\ref{BCT-uniform-inv-switch-2.3}).
	
	(ii) If $x\neq 0, 1, \omega, \omega^2$, we have 
\begin{equation}
\label{BCT-uniform-inv-switch-2.3-1}
\left\{
\begin{array}{lr}
\frac{1}{x+\omega}+ \frac{1}{y+\omega} = \omega^2, \\
\frac{1}{x}  + \frac{1}{y} = \omega^2.
\end{array}
\right.
\end{equation}

	After simplifying (\ref{BCT-uniform-inv-switch-2.3-1}), we get $y=\frac{x}{\omega^2x+1}$ and $x^2+\omega x+\omega^2=0$. Then $x=1$ or $\omega^2$, which is a contradiction.

	{\bfseries Subcase 2.4: $b\neq 1, \omega, \omega^2$.} In the subcase, (\ref{BCT-uniform-inv-switch}) becomes 
\begin{equation}
\label{BCT-uniform-inv-switch-2.4}
\left\{
\begin{array}{lr}
	f(x+\omega) + f(y+\omega) = b, \\
f(x) + f(y) = b.
\end{array}
\right.
\end{equation}
	
	(i) If $x=0$, we get $f(y)=b+1$ and $f(y+\omega)=b+\omega^2$. Thus $y=\frac{1}{b+1}=\frac{1}{b+\omega^2}+\omega$, which means $b^2+\omega b+\omega = 0.$ Similarly, if $x=1$, we can get $y=\frac{1}{b}$ and $b^2+\omega b+1=0$. Also, if $x=\omega$, we obtain $y=\frac{1}{b+\omega^2}$ and $b^2+\omega b+\omega = 0$. In addition, if $x=\omega^2$, we have $y=\frac{1}{b+\omega}$ and $b^2+\omega b+1=0$. 
	
	Hence, when $b^2+\omega b+\omega = 0$,  $(0,\frac{1}{b+1})$, $(\frac{1}{b+1}, 0)$,  $(\omega, \frac{1}{b+\omega^2})$ and $(\frac{1}{b+\omega^2}, \omega)$ are four  solutions of (\ref{BCT-uniform-inv-switch-2.4}); when $b^2+\omega b+1=0$, $(1,\frac{1}{b})$, $(\frac{1}{b},1)$, $(\omega^2,\frac{1}{b+\omega}), (\frac{1}{b+\omega}, \omega^2)$ are four solutions of (\ref{BCT-uniform-inv-switch-2.4}).
	
	(ii) If $x\neq 0, 1, \omega, \omega^2$, (\ref{BCT-uniform-inv-switch-2.4}) becomes 
\begin{equation}
\label{BCT-uniform-inv-switch-2.4-1}
\left\{
\begin{array}{lr}
\frac{1}{x+\omega} + \frac{1}{y+\omega} = b, \\
\frac{1}{x}  + \frac{1}{y} = b.
\end{array}
\right.
\end{equation}

	After simplifying (\ref{BCT-uniform-inv-switch-2.4-1}), we get $y=\frac{x}{bx+1}$ and $bx^2+b\omega x+\omega=0$, which has at most two solutions in $\gf_{2^n}$. Moreover, (\ref{BCT-uniform-inv-switch-2.4-1}) has at most two solutions in $\gf_{2^n}\times\gf_{2^n}$.
	
	Therefore, in the case, $a=\omega$ or $\omega^2$,  the maximum of numbers of solutions in $\gf_{2^n}\times\gf_{2^n}$ of (\ref{BCT-uniform-inv-switch}) is
	\begin{equation*}
	\left\{
	\begin{aligned}
	4,  ~~\text{if}~~ n\equiv2\pmod4, \\
	6,   ~~\text{if}~~ n\equiv0\pmod4.
	\end{aligned}
	\right.
	\end{equation*}

	{\bfseries Case 3: $a\notin\{1,\omega,\omega^2\}$. }
	
	(i) If $x = 0, 1, a, a+1$,  according to Remark \ref{rem-8-4},  when $n\equiv0\pmod3$, (\ref{BCT-uniform-inv-switch}) has at most eight solutions in $\gf_{2^n}\times\gf_{2^n}$ and when $n\not\equiv 0\pmod 3$, (\ref{BCT-uniform-inv-switch}) has at most four solutions in $\gf_{2^n}\times\gf_{2^n}$. 
	
	(ii) If $x\neq 0, 1, a, a+1$, (\ref{BCT-uniform-inv-switch}) becomes 
\begin{equation}
\label{BCT-uniform-inv-switch-3}
\left\{
\begin{array}{lr}
\frac{1}{x+a} + \frac{1}{y+a} = b, \\
\frac{1}{x}  + \frac{1}{y} = b.
\end{array}
\right.
\end{equation}

	After simplifying (\ref{BCT-uniform-inv-switch-3}), we get $y=\frac{x}{bx+1}$ and $x^2+ax+\frac{a}{b}=0$, which has two solutions $x_0, x_0+a$ in $\gf_{2^n}$ if and only if $\tr_{2^n}(\frac{1}{ab})=0$ and when $\tr_{2^n}(\frac{1}{ab})=0$, $(x_0, x_0+a)$ and $(x_0+a,x_0)$ are two solutions of (\ref{BCT-uniform-inv-switch-3}). 
	
	From Lemma \ref{lem-S1S2S3S4}, (\ref{BCT-uniform-inv-switch}) has eight solutions in $\gf_{2^n}\times\gf_{2^n}$ if and only if $n\equiv0\pmod3$ and $(a,b)\in\{(a,b) \in\gf_{2^n}^{*}\times\gf_{2^n}^{*} | a^3+a+1=0 ~~\text{and}~~ b=a \}.$ From the above (ii),  $(x_0, x_0+a)$ and $(x_0+a,x_0)$ are two solutions of (\ref{BCT-uniform-inv-switch-3}) if and only if $\tr_{2^n}(\frac{1}{ab})=0$ holds. In the following, we consider when the above two conditions hold at the same time. Plugging $b=a, a^3+a+1=0$ into  $\tr_{2^n}(\frac{1}{ab})=0$, we get $$\tr_{2^n}\left(\frac{1}{a}\right)=\tr_{2^n}\left(1+a^2\right)=\tr_{2^n}(1+a)=0.$$
	On the other hand, since $a^3+a+1=0$, i.e., $a^4+a^2+a=0$, $\tr_{2^n}(a)=(a+a^2+a^4)+(a+a^2+a^4)^{2^3}+\cdots+(a+a^2+a^4)^{2^{n/3-1}}=0.$ Thus  $\{(a,b) \in\gf_{2^n}^{*}\times\gf_{2^n}^{*} | a^3+a+1=0 ~~\text{and}~~ b=a \}\cap \{(a,b) \in\gf_{2^n}^{*}\times\gf_{2^n}^{*} | \tr_{2^n}\left(\frac{1}{ab}\right)=0 \} \neq \emptyset$ if and only if $n\equiv0\pmod3$ and $\tr_{2^n}(1)=0$, i.e., $n\equiv0\pmod6$.  In other words, in the case, for any $n\equiv0\pmod6$, (\ref{BCT-uniform-inv-switch-3}) has at most ten solutions in $\gf_{2^n}\times\gf_{2^n}$ and for any $n\equiv3\pmod6$, (\ref{BCT-uniform-inv-switch-3}) has at most eight solutions in $\gf_{2^n}\times\gf_{2^n}$.  
	
	In addition, (\ref{BCT-uniform-inv-switch}) has four solutions in $\gf_{2^n}\times\gf_{2^n}$ if and only if $(a,b)\in (S_3\cup S_4) \backslash (S_3\cap S_4),$ where $S_3 := \{ (a,b)\in\gf_{2^n}^{*}\times\gf_{2^n}^{*} |  a^2b^2+a^2b+ab+1=0 \}$ and $S_4 := \{ (a,b) \in\gf_{2^n}^{*}\times\gf_{2^n}^{*} | a^2b^2+ab^2+ab+1=0  \}.$  In the following, we consider $S\cap \{(a,b)\in\gf_{2^n}^{*}\times\gf_{2^n}^{*} | \tr_{2^n}\left(\frac{1}{ab}\right) = 0 \}$, where $S = (S_3\cup S_4) \backslash (S_3\cap S_4)$ and $n\not\equiv0\pmod3$.   Since $\tr_{2^n}\left(\frac{1}{ab}\right)=0$ holds if and only if there exists $z\in\gf_{2^n}$ such that $\frac{1}{ab}=z+z^2$, i.e., $ab=\frac{1}{z}+\frac{1}{z+1}$. Plugging $ab=\frac{1}{z}+\frac{1}{z+1}$ into $a^2b^2+a^2b+ab+1=0$, we have $\frac{1}{z^2}+\frac{1}{z^2+1}+(\frac{1}{z}+\frac{1}{z+1})(1+b)+1=0,$ i.e., $$b=\frac{z^4+z+1}{z^2+z}.$$  Let $b=\frac{z^4+z+1}{z^2+z}$ and $a=\frac{1}{z^4+z+1}$. Obviously, for any $n\not\equiv0\pmod3$, there exist some $z\in\gf_{2^n}$ such that $a, b$ exist. Therefore,  $S\cap \{(a,b) \in\gf_{2^n}^{*}\times\gf_{2^n}^{*} | \tr_{2^n}\left(\frac{1}{ab}\right) = 0 \}\neq \emptyset$. That is to say, for any $n\not\equiv0\pmod3$, in the case, (\ref{BCT-uniform-inv-switch-3}) has at most six solutions in $\gf_{2^n}\times\gf_{2^n}$. 
	
	Therefore, in this case $a\notin\{1,\omega,\omega^2\}$,  the maximum of numbers of solutions in $\gf_{2^n}\times\gf_{2^n}$ of (\ref{BCT-uniform-inv-switch}) is
	\begin{equation*}
	\left\{
	\begin{aligned}
	10,  ~~\text{if}~~ n\equiv0\pmod6, \\
	8, ~~\text{if}~~ n\equiv 3 \pmod 6, \\
	6,   ~~\text{if}~~ n\not\equiv0\pmod3.
	\end{aligned}
	\right.
	\end{equation*}

	To sum up, together with Cases 1, 2 and 3, we know that the maximum of numbers of solutions in $\gf_{2^n}\times\gf_{2^n}$ of (\ref{BCT-uniform-inv-switch}) is
	\begin{equation*}
	\left\{
	\begin{aligned}
	10,  ~~\text{if}~~ n\equiv0\pmod6, \\
	8, ~~\text{if}~~ n\equiv 3 \pmod 6, \\
	6,   ~~\text{if}~~ n\not\equiv0\pmod3.
	\end{aligned}
	\right.
	\end{equation*}
\end{proof}

\end{document}